\keywords{Dependent type theory, modal type theory, cubical type theory, guarded recursion,
categorical semantics}
\def\eg{{\em e.g.}}
\begin{document}

\title[Unifying cubical and multimodal type theory]{Unifying cubical and multimodal type theory}

\author[F.~Aagaard]{Frederik Lerbjerg Aagaard%
\lmcsorcid{0000-0002-9132-0098}}[a]

\author[M.~Kristensen]{Magnus Baunsgaard Kristensen}[b]

\author[D.~Gratzer]{Daniel Gratzer%
\lmcsorcid{0000-0003-1944-0789}}[a]

\author[L.~Birkedal]{Lars Birkedal%
\lmcsorcid{0000-0003-1320-0098}}[a]

\address{Aarhus University}
\email{aagaard@cs.au.dk, gratzer@cs.au.dk, birkedal@cs.au.dk}

\address{IT University of Copenhagen}
\email{magnusbk95@gmail.com}

\begin{abstract}
  In this paper we combine the principled approach to modalities from multimodal type
  theory (MTT) with the computationally well-behaved realization of identity types from cubical type
  theory (CTT). The result---cubical modal type theory (Cubical MTT)---has the desirable features of
  both systems. In fact, the whole is more than the sum of its parts: Cubical MTT validates
  desirable extensionality principles for modalities that MTT only supported through ad hoc means.

  We investigate the semantics of Cubical MTT and provide an axiomatic approach to producing
  models of Cubical MTT based on the internal language of topoi and use it to construct presheaf
  models. Finally, we demonstrate the practicality and utility of this axiomatic approach to models
  by constructing a model of (cubical) guarded recursion in a cubical version of the topos of
  trees. We then use this model to justify an axiomatization of L{\"o}b induction and thereby use
  Cubical MTT to smoothly reason about guarded recursion.
\end{abstract}

\maketitle

\section{Introduction}
\label{sec:intro}

Type theorists have produced a plethora of variants of type theory since the introduction of
Martin-L{\"o}f type theory (\MLTT{}), each of which attempts to refine \MLTT{} to enhance its
expressivity or convenience. Unfortunately, even extensions of type theory which appear orthogonal
cannot be carelessly combined. Expert attention is frequently necessary to ensure that combinations
of extensions retain desirable properties of base type theory. We are particularly interested in two
families of extensions to \MLTT{}: \emph{cubical type theories}~\cite{cohen:2018,angiuli:2021} and
\emph{(Fitch-style) modal type theories}~\cite{clouston:2018,birkedal:2020,gratzer:2020}.

Both of these lines of research aim to increase the expressivity of type theory, but along different
axes. Cubical type theory gives a higher-dimensional interpretation to the identity type and thereby
obtains a more flexible notion of equality along with a computational interpretation of
\emph{univalent foundations}~\cite{hottbook}. Meanwhile, modal dependent type theory (\MTT{}) extends
\MLTT{} with connectives which need not commute with substitution, allowing for type theory to model
phenomena such as guarded recursion, axiomatic cohesion, or
parametricity~\cite{birkedal:2012,shulman:2018,nuyts:2018}.

While combining two complex type theories like this is not a task to be undertaken frivolously,
experience has shown that a principled mixture of these two type theories would be useful. Indeed,
modalities naturally appear in synthetic homotopy theory~\cite{shulman:2018} but without an
apparatus like \MTT{}, these extensions must be handled in an ad-hoc way, which can easily disrupt
the desirable properties of type theory. Moreover, cubical type theory's version of equality
validates invaluable principles like function extensionality and a cubical variant of \MTT{} would
thus eliminate the need to postulate such principles~\cite{birkedal:2019,kristensen:2021}.

Prior to discussing how \CubicalMTT{} fuses these systems, we set the stage by introducing both
cubical type theory and multimodal type theory separately.

\subsection*{Cubical type theory}

Cubical type theory originates from the broader class of \emph{homotopy type theories} whose study
was instigated by Voevodsky's observation that the intensional identity type could be realized as a
(homotopical) path space~\cite{kapulkin:2021}. This shift in perspective justifies the inclusion of
the univalence axiom which postulates an equivalence between equalities of elements of a
universe and equivalences of the denoted types. While univalence has many pleasant consequences
(function extensionality, effectivity of quotients, \etc{}), the addition of such an axiom disrupts
crucial properties of the type theory. In particular, it is not possible to compute in such a
theory. In order to rectify this issue, cubical type theory was introduced and shown to
simultaneously support computation and validate the univalence axiom.

Cubical type theory extends \MLTT{} with an interval object $\I$ along with a function space---a
path space---to hypothesize over it. Intuitively, $\I$ abstracts the interval $[0,1]$ and this
connection is enhanced by the addition of operations, \eg{} $0, 1 : \I$. Accordingly, $\I \to A$
classifies lines in $A$ and by iterating we obtain squares, cubes, and arbitrary $n$-cubes in $A$.

While homotopy type theory recasts the identity type from \MLTT{} as a path in a space, cubical type
theory begins with paths and forces them to behave like an identity type. We therefore isolate a
subtype $\Path{A}{a_0}{a_1}$ of paths $\I \to A$ whose values at $0$ and $1$ are $a_0$ and $a_1$. By
further equipping $A$ with \emph{Kan operations}, $\Path{A}{a_0}{a_1}$ becomes a new model for the
identity type. The Kan operations are subtle and complex but without them $\Path{A}{-}{-}$ is not
even an equivalence relation. The flexibility afforded by these Kan operations, however, allows
cubical type theory to support a computational interpretation of univalence.

\begin{rem}
  The path type of a type with Kan operations is not a perfect enhancement of Martin-L{\"o}f's
  identity type: the latter satisfies definitional equalities not enjoyed by the former. However,
  these two types share the same universal property and so we shall use the term identity type to
  refer to both and use \emph{intensional identity type} or \emph{path type} to distinguish them.
\end{rem}

\subsection*{\MTT{} and Fitch-style modal type theories}

We now turn from cubical type theory to modal type theory. Like cubical type theory, the motivations
for modalities---type constructors which do not necessarily respect substitution---are semantic;
many models of type theory have further structure which does not directly commute with
substitution but would still be useful to internalize. For instance, the global sections comonad
of a presheaf model is frequently essential for working internally to the
model~\cite{clouston:2015,licata:2018,shulman:2018}, but it almost never commutes with substitution.

Unfortunately, much of the convenience of \MLTT{} hinges on the fact that all operators do commute
with substitution, so introducing a modality tends to disrupt nearly every property of
importance. In order to cope with this contradiction, modal type theories like
\MTT{}~\cite{gratzer:2020} have carefully isolated classes of modalities which can be safely
incorporated into type theory while preserving properties such as canonicity and
normalization~\cite{gratzer:normalization:2022}. In fact, \MTT{} can be instantiated with an
arbitrary collection of (weak) dependent right adjoints~\cite{birkedal:2020}. The metatheory of
\MTT{} applies irrespective of the choice of mode theory, and therefore \MTT{} can be seen as a
\emph{general} modal type theory, suitable for instantiation with a wide variety of different
modalities to specialize the type theory to capture particular models.

In practice, \MTT{} is parameterized by a mode theory, a 2-category used to describe the modalities
in play. The objects of the mode theory correspond to individual copies of \MLTT{} linked together
by the modalities, the morphisms of the mode theory. The 2-cells of the mode theory induce natural
transformations between modalities. For instance, in order to model the global sections comonad we
pick a mode theory with one mode $m$, one modality $\mu$, and a collection of 2-cells shaping this
modality into a comonad, \eg{}, 2-cells $\Mor{\mu}{\mu \circ \mu}$ and $\Mor{\mu}{\ArrId{m}}$
subject
to several equations. Upon instantiating \MTT{} with this mode theory we obtain a type theory with a
comonad already known to satisfy many important metatheorems.

\subsection*{Towards Cubical \MTT{}}

In \cite{gratzer:2020}, each mode of \MTT{} contains a copy of \MLTT{}. Unfortunately, the type
theory therefore inherits the well-known limitations of \MLTT{}: the intensional identity type is difficult to
work with, function extensionality is not satisfied. One can resolve these issues by adding equality
reflection to \MTT{}, but this disrupts the decidability of type-checking. Moreover, several
modalities arise in the context of homotopy type theory~\cite{shulman:2018} and adapting \MTT{} to
these models requires simply postulating univalence, thereby conferring the same set of
difficulties.

We introduce \CubicalMTT{}, a unification of Cubical type theory and \MTT{}. To a first approximation,
\CubicalMTT{} replays the theory of \MTT{}, after replacing \MLTT{} with cubical type theory. One
thereby obtains a modal type theory with different modes---now copies of cubical type
theory---connected by arbitrary dependent right adjoints. Moreover,
each mode now satisfies univalence and function extensionality.

Beyond this, a computation rule for Kan operations in modal types is needed for computation (and thus for normalization), but it is not immediately well-typed.
Indeed, a key challenge in combining \MTT{} with \CTT{} is exactly to capture sufficient interactions between modal and cubical aspects for this rule to be well-typed, whilst not making greater demands than the intended models can bear.

The switch from using \MLTT{} to using \CTT{} in \MTT{} also improves modal types.  For instance, in
\cite{gratzer:normalization:2022} special care is taken to include \emph{crisp induction} in order
to validate the modal counterpart to function extensionality.  While this addition preserves
normalization and canonicity, modal extensionality is independent of \MTT{}.  In \CubicalMTT{}, by
contrast, the corresponding principle is simply provable (Theorem~\ref{thm:cmtt:modal-ext}).

We show that models of \CubicalMTT{} can be assembled from certain models of cubical type theory
connected by right adjoints. In particular, given coherent functors $\Mor[f_\mu]{\CC_n}{\CC_m}$
there is a model of \CubicalMTT{} which realizes context categories as $\PSH[\CSET]{\CC_m}$ and
modalities as right Kan extension $\RKan*{f_\mu}$. This ensures, for example, that despite the
complexity of both \MTT{} and cubical type theory, it is easily shown that \CubicalMTT{},
appropriately instantiated, models cubical guarded recursion~\cite{birkedal:2019,kristensen:2021}.
Indeed, we show that the cubical underpinnings of \CubicalMTT{} improve the presentation of guarded
recursion in \MTT{}~\cite{gratzer:journal:2021}.

The development of this theory of models also implies the soundness of \CubicalMTT{}. We further
conjecture, but do not prove, that the normalization results of \cite{gratzer:normalization:2022}
for \MTT{} and \cite{sterling:2021} for cubical type theory can be appropriately combined into a
normalization proof for \CubicalMTT{}.

\subsection*{Contributions}

We contribute \CubicalMTT{}, a synthesis of cubical type theory and \MTT{}, and a foundation for
multimodal and higher-dimensional type theories. In \autoref{sec:cubical-and-mtt} we recapitulate
the
basics of cubical type theory and \MTT{} and in \autoref{sec:cubical-mtt}
we present the definition of \CubicalMTT{} and further prove the aforementioned modal extensionality
principle. Finally, in \autoref{sec:semantics} we introduce the model theory of \CubicalMTT{} and
further show that cubical presheaves and certain essential geometric morphisms assemble into models.
We then apply this theory to cubical guarded recursion in \autoref{sec:examples} and explore the
improved presentation of guarded recursion.

\section{Cubical and multimodal type theory}
\label{sec:cubical-and-mtt}

We now recall the essential details of cubical type theory~\cite{cohen:2018} and
\MTT{}~\cite{gratzer:journal:2021}. We focus mostly on the portions relevant for \CubicalMTT{} and
refer the reader to the existing literature for a more thorough introduction. Readers familiar with
both systems may safely proceed to \autoref{sec:cubical-mtt}.

\subsection{Cubical type theory}
\label{sec:cubical-syn}
\CTT{} begins by extending \MLTT{} with a primitive interval $\I$ and algebraic structure to mimic
the real interval $[0,1]$. Terms of type $A$ which assume \emph{dimension variables}
$i, j, k : \I$ correspond to $n$-cubes (lines, squares, cubes) in $A$. Concretely, we add a new
context formation rule $\Gamma, i : \I$ and a new syntactic class of \emph{dimension terms}
$\Gamma \vdash r : \I$:
\begin{grammar}
  Abstract interval & r, s : \I &
  i \GrmSep 0 \GrmSep{} 1 \GrmSep{} 1 - r \GrmSep{} r \lor s \GrmSep{} r \land s
\end{grammar}
We further identify dimension terms by the equations of De Morgan algebras.

\begin{rem}
  We note that there are in fact many variations on cubical type theory which primarily impose
  differing amounts of structure on the interval. For our work, we have opted De Morgan cubical type
  theory~\cite{cohen:2018}, but we expect our work could be adapted to \eg{}, cartesian cubical type
  theory~\cite{angiuli:2021}.
\end{rem}

Next, we add \emph{path types}: a dependent product over the interval. The rules for this new
connective are given in \autoref{fig:ctt-rules} and---just as with dependent products---path types
enjoy $\beta$ and $\eta$ rules stating \eg{} $\PathApp{\prn{\PathAbsV{p}}}{r} = \Sb{p}{r/i}$. In
addition to $\beta$ and $\eta$, paths are equipped with further equalities reducing them at
endpoints, \eg{}, given $p : \Path{A}{a}{b}$ then $\PathApp{p}{0} = a : A$.

\begin{rem}
  Given $x, y : A$, we write $x \PathEq y$ when there exists an element of $\Path{A}{x}{y}$.
\end{rem}

Out of the box, paths define a relation on types which is reflexive and symmetric and which
validates extensionality principles such as function extensionality. They are not, however,
transitive and it is this deficiency that leads to the \emph{Kan composition operation} which forms
the
backbone of \CTT{}. Intuitively, this composition operation lets us complete an open
box (an $n$-cube missing certain faces) to an $n$-cube. In order to formalize this geometric
intuition we add the face lattice $\F$, a class of propositions, which we use to codify the open
boxes to be filled. We therefore add another syntactic class $\Gamma \vdash \phi : \F$:
\begin{grammar}
  Face lattice & \phi,\psi: \F & \bot \GrmSep{} \top \GrmSep{} (r=0) \GrmSep{} (r=1) \GrmSep{} \phi
  \lor \psi \GrmSep{} \phi \land \psi
\end{grammar}
Elements of $\F$ are identified by the equations of distributive lattices as well as the additional
equation $(r=0) \land (r=1) = \bot$.

Elements $\phi : \F$ are used to \emph{restrict} a context by assuming
them, denoted $\RCxV{\Gamma}$. This allows us to locally force $\phi$ to hold so that, \eg{},
$\RCxV {i:\I} [(i=0)] \vdash i = 0 : \I$ \footnote{Note that $(i = 0)$ on the left-hand side is a
face restriction, whilst $i = 0$ on the right-hand side is a definitional equality.}.
We can take advantage of an assumption $\phi$ in our context through \emph{systems}. The rules for
systems are given in \autoref{fig:ctt-rules}; intuitively they state that to construct an element in
$\RCxV{\Gamma}[\bigvee_i \phi_i] \vdash u : A$, it suffices to construct elements
$\RCxV{\Gamma}[\phi_i] \vdash u_i : A$ that agree on the overlap. An element constructed through
this amalgamation restricts appropriately \eg{}, $\RCxV{\Gamma}[\phi_i] \vdash u = u_i : A$.

We are frequently concerned with the behavior of a term after some assumption $\phi$---its
\emph{boundary}---and therefore introduce notation for it. We write
$\Gamma \vdash a : \ExtType {A} {\phi\mapsto u}$ as shorthand for (1) $a$ being a term of $A$ and
(2) under the assumption $\phi$, $a = u : A$. With this machinery, we can now formulate the Kan
composition rule, shown in \autoref{fig:ctt-rules}. This one principle is sufficient to prove the
properties we expect of identity types, including J (path induction).\footnote{Unlike in \MLTT{},
  however, path induction reduces on reflexivity only up to a path.}

\begin{figure}
  \begin{mathparpagebreakable}
    \inferrule{
      \Gamma \vdash a : A
      \\
      \Gamma \vdash b : A
    }{
      \Gamma \vdash \Path A a b
    }
    \and
    \inferrule{
      \Gamma, i : \I \vdash p : A
    }{
      \Gamma \vdash \PathAbsV{p} : \Path A {p(0)} {p(1)}
    }
    \and
    \inferrule{
      \Gamma \vdash r : \I
      \\
      \Gamma \vdash p : \Path A a b
    }{
      \Gamma \vdash \PathApp p r : A
    }
    \and
    \inferrule{
      \Gamma, \phi_i \vdash a_i : A
      \\
      \Gamma, \phi_0 \land \phi_1 \vdash a_0 = a_1 : A
    }{
      \RCxV{\Gamma}[\phi_0 \lor \phi_1] \vdash \SysBin {\phi_0} {a_0} {\phi_1} {a_1} : A
    }
    \and
    \inferrule{ }{
      \RCxV{\Gamma}[\Fbot] \vdash \SysNull : A
    }
    \and
    \inferrule{
      \Gamma \vdash \phi : \F
      \\
      \Gamma, i : \I \vdash A
      \\
      \Gamma, \phi, i : \I \vdash u : A
      \\
      \Gamma \vdash u_0 : \ExtType{\Sb {A}{0/i}}{\phi \mapsto \Sb{u}{0/i}}
    }{
      \Gamma \vdash \CompCommV{i}{A}{\phi \mapsto u}{u_0} : \ExtType{\Sb{A}{1/i}}{\phi\mapsto \Sb
      {u}{1/i}}
    }
  \end{mathparpagebreakable}
  \caption{Selected rules from \CTT{}}
  \label{fig:ctt-rules}
\end{figure}

We review the proof that path equality is transitive to give the reader a sense of the rule.
Let $A$ be a type that does not depend on any interval variables, and suppose $a,b,c : A$,
$p:\Path{A}{a}{b}$, and $q:\Path{A}{b}{c}$. We form three lines in $A$: The paths $p$ and $q$ as
well as the constant $a$ line. Using these we form a system depending on $i$ and $j$ given by
$\SysBin {(i=0)} {a} {(i=1)} {\PathApp {q} {j}}$. The path $p$ forms an extension of this system at
$j=0$, and so we can form the path
$\PathAbsV {\CompCommV {j} {A} {(i=0) \mapsto a, (i=1) \mapsto \PathApp {q} {j}} {\PathApp {p}
    {i}}}$, which will reduce to the $j=1$ part of our designed system, i.e., $a$ at $i=0$ and
$\PathApp {q} {1} = c$ at $i=1$, thus proving transitivity. We think of the input data as an open
box with bottom $p$ and sides given by the system; in this analogy the composition forms a lid
completing the outer square.
\[\begin{tikzpicture}[diagram]
  \node (NW) {$a$};
  \node [below = of NW] (SW) {$a$};
  \node [right = of SW] (SE) {$b$};
  \node [above = of SE] (NE) {$c$};
  \path [->] (SW) edge node [left] {$a$} (NW);
  \path [->] (SW) edge node [below] {$\PathApp{p}{i}$} (SE);
  \path [->] (SE) edge node [right] {$\PathApp{q}{j}$} (NE);
  \path [dashed,->] (NW) edge (NE);
\end{tikzpicture}\]

\subsection{Multimodal type theory}
\label{sec:multimodal-syn}

\begin{figure}
  \begin{mathparpagebreakable}
    \inferrule{
      \IsCx \Gamma <m>
    }{
      \IsCx {\LockCxV \Gamma} <n>
    }
    \and
    \inferrule{
      \IsTy [\LockCxV \Gamma] A <n>
    }{
      \IsTy {\Modify{A}}
    }
    \and
    \inferrule{
      \IsTm [\LockCxV \Gamma] a A <n>
    }{
      \IsTm {\MkBox a} {\Modify{A}}
    }
    \and
    \inferrule{
      \IsTy [\LockCxV \Gamma] A <n>
    }{
      \IsCx {\ECxV \Gamma x A}
      \\
      \IsTm [\LockCxV {\ECxV \Gamma x A}] {x} {A} <n>
    }
    \and
    \inferrule{
      \mu : n \to m \and \nu : o \to n \and \IsCx \Gamma
      \\
      \IsTy [\LockCxV {\LockCxV \Gamma} <\nu>] A <o>
      \\
      \IsTm [\LockCxV \Gamma] a {\Modify [\nu] {A}} <n>
      \\
      \IsTy [\ECxV \Gamma x {\Modify [\nu] {A}}] B
      \\
      \IsTm [\ECxV \Gamma y A <\mu \circ \nu>] b {\Sb B {\MkBox [\nu] y / x}}
    }{
      \IsTm {\LetMod{a}[y]{b}<\nu>[\mu]} {\Sb B {a / x}}
    }
  \end{mathparpagebreakable}
  \caption{Selected rules from \MTT{}}
  \label{fig:mtt-rules}
\end{figure}

To a first approximation, \MTT{} is a collection of copies of \MLTT{} for each $m : \Mode$,
connected by dependent adjunctions~\cite{birkedal:2020}. \MTT{} is parameterized by a mode
theory $\Mode$~\cite{licata:2016}, a strict 2-category. Each object $m,n,o : \Mode$ is assigned
to a distinct \emph{mode}: a copy of \MLTT{} complete with its own judgments ($\IsCx{\Gamma}$,
$\IsTm{M}{A}$, \dots). Many of the rules of \MTT{} (dependent sums, inductive types, \etc{}) are
\emph{mode-local} and taken as-is from \MLTT{}; the interesting features of \MTT{}
arise from allowing modes to interact with each other.

Modes are connected to each other by \emph{modalities}: a morphism $\Mor[\mu]{n}{m}$
induces a modality sending types $A$ from mode $n$ to types $\Modify{A}$ in mode $m$. The actual
presentation of modalities is necessarily complex because of dependence: given a type $\IsTy{A}<n>$,
there is no obvious choice of context in mode $m$ for $\Modify{A}$. \MTT{} resolves this tension in
Fitch-style~\cite{clouston:2018} and pairs each modality with an adjoint action on contexts. In
particular, given a modality $\Mor[\mu]{n}{m}$, we can obtain a new context
$\IsCx{\LockCxV{\Gamma}}<n>$
from $\IsCx{\Gamma}$. Further rules turn $\LockCxV{-}$ into a functor between
categories of contexts and substitutions at modes $n$ and $m$; intuitively a left
adjoint to the modal type former $\Modify{-}$. See the introduction and formation rules for
$\Modify{-}$ recorded in \autoref{fig:mtt-rules}.

The elimination rule for $\Modify{-}$ is complex because we cannot `reverse' the introduction rules
without violating the admissibility of substitution. Instead, \MTT{} annotates each variable
in the context and replaces $\Gamma, x : A$ with $\ECxV{\Gamma}{x}{A}$. One can access a variable
annotated with $\mu$ if and only if it appears behind precisely $\LockCxV{-}<\mu>$. The elimination
rule uses these annotations to implement a \emph{modal induction principle} and allows one to
reduce the
process of constructing an element of $\Sb{B}{a/x}$ for some
$\IsTm[\LockCxV{\Gamma}<\nu>]{a}{\Modify{A}}$ to the case $\Sb{B}{\MkBox{y}/x}$ for some fresh
$\DeclVar{y}{A}<\nu\circ\mu>$; see the precise rule in \autoref{fig:mtt-rules}.

Thus far we have not mentioned the (2-)categorical structure of $\Mode$, but it is this additional
structure which allows us to control the behavior of modalities. In fact, the operation sending a
modality $\mu$ to $\LockCxV{-}$ is 2-functorial so that, \eg{},
$\LockCxV{\LockCxV{\Gamma}}<\nu> = \LockCxV{\Gamma}<\mu\circ\nu>$. This fact is reflected into
types; the assignment $\mu \mapsto \Modify[\mu]{-}$ is essentially pseudo-functorial.
Consequently, a 2-cell
$\Mor[\alpha]{\mu}{\nu}$ in $\Mode$ induces a substitution
$\IsSb[\LockCxV{\Gamma}<\nu>]{\Key{\alpha}{\Gamma}}{\LockCxV{\Gamma}}$ which in turn produces a
collection
of functions $\Modify[\mu]{-} \to \Modify[\nu]{-}$. By modifying the equalities and 2-cells of
$\Mode$, we can force $\Modify{-}$ to become, \eg{}, a comonad, a right adjoint, \etc{}

\MTT{} also extends dependent products to hypothesize over types annotated with modalities
other than $\ArrId{}$, \ie{}, to abstract over
$x : \prn{\DeclNameless{A}}$~\cite{gratzer:journal:2021}. While these \emph{modal dependent
products} are convenient, we refrain from discussing them here for simplicity.

\section{\texorpdfstring{\CubicalMTT}{Cubical MTT}}
\label{sec:cubical-mtt}

Cubical multimodal type theory (\CubicalMTT{}) enhances \MTT{} with a better behaved identity type
and univalence by combining it with \CTT{}. Like \MTT{}, \CubicalMTT{} is parameterised by a mode
theory, \ie{}, a 2-category of modes, modalities, and 2-cells. Whereas \MTT{} has a copy of \MLTT{}
at each mode, \CubicalMTT{} has a copy of \CTT{}. A guiding principle in the design of \CubicalMTT{}
is that cubical and modal aspects should be \emph{orthogonal} to each other. In particular, the
primitives of each system should interact as little as possible with primitives from the other. To
realize this, we add certain exchange principles in~\autoref{sec:cub-exchange} which are then
applied in~\autoref{sec:modal-comp} to define composition for modal types.

We detail the novel aspects of \CubicalMTT{} and refer to \autoref{app:rules} for an exhaustive
account.

\begin{figure}
  \begin{mathparpagebreakable}
    \inferH{int/exc}{
      \IsTm{r}{\I _ m}
    }{
      \IsTm[\LockCxV \Gamma]{\ExcIntInv r}{\I_n}<n>
    }
    \and
    \inferH{face/exc}{
      \IsTm  \phi {\F _ m} <m>
    }{
      \IsTm  [\LockCxV \Gamma] {\ExcCofInv \phi} {\F _ n} <n>
    }
    \\
    \inferH{sb/exc-int}{
    }{
      \IsSb [\LockCxV {\ICxV \Gamma}] \ExcSbI {\ICxV {\LockCxV \Gamma} <n>}
    }
    \and
    \inferH{sb/exc-face}{
      \IsTm \phi {\F _ m}
    }{
      \IsSb [\LockCxV {\RCxV \Gamma}] \ExcSbR {\RCxV {\LockCxV \Gamma} [\ExcCofInv \phi]}
    }
    \and
    \inferH{sb/exc-int-inv}{
    }{
      \IsSb [\ICxV {\LockCxV \Gamma} <n>] \ExcSbIinv {\LockCxV {\ICxV \Gamma}} <n>
    }
    \and
    \inferH{sb/exc-face-inv}{
      \IsTm \phi {\F _ m}
    }{
      \IsSb [\RCxV {\LockCxV \Gamma} [\ExcCofInv \phi]] \ExcSbRinv {\LockCxV {\RCxV \Gamma}} <n>
    }
    \and
    \inferH{term-eq/comp-mod}{
      \IsTy [\LockCxV {\ICxV \Gamma}] A <n>
      \\
      \IsTm \phi {\F _ m}
      \\
      \IsTm [\LockCxV {\ICxV {\RCxV \Gamma}}] u A <n>
      \\
      \IsTm [\LockCxV \Gamma] {u _ 0} {\Sb A {0 / i}} <n>
      \\
      \EqTm [\LockCxV {\RCxV \Gamma}] {\Sb u {0 / i}} {u _ 0} {\Sb A {0 / i}} <n>
    }{
      \Gamma
      \vdash
      {\small \begin{array}{@{}c@{}}
        \CompCommV i {\Modify {A}} {\phi \mapsto \MkBox u} {\MkBox {u _ 0}}
        \\
        =
        \\
        \MkBox{\CompCommV i A {\ExcCofInv \phi \mapsto \Sb u {\ExcSbIinv \circ \ExcSbRinv}} {u _ 0}}
      \end{array}}
      :
      \Sb {\Modify {A}} {1 / i}
      \Mute{{} \mathop{@} m}
    }
  \end{mathparpagebreakable}
  \caption{Selected rules of \CubicalMTT{}, presupposing $\Mor[\mu]{n}{m}$ and $\IsCx \Gamma$.}
  \label{fig:mtt-box-rules}
\end{figure}

\subsection{Cubical exchange}
\label{sec:cub-exchange}

The orthogonality principle of \CubicalMTT{} dictates that the interval should be minimally
impacted by the action of a modality on the context. Accordingly, we add \emph{exchange
operations}. Concretely, given a dimension term $\IsTm{r}{\I_m}$, we add a new dimension term
$\IsTm[\LockCxV{\Gamma}]{\ExcIntInv{r}}{\I_n}<n>$, see \ruleref{int/exc} in
\autoref{fig:mtt-box-rules}. We demand that this operation is a morphism of De Morgan algebras and
is lax natural, e.g. $\ExcIntInv{\prn{r \land s}} = \ExcIntInv{r} \land \ExcIntInv{s}$ and
$\ExcIntInv{r}<\mu\circ\nu> = \ExcIntInv{\prn{\ExcIntInv{r}}}<\nu>$. Using this operation, it is
possible to derive the \emph{exchange substitution}
$\IsSb[\LockCxV{\ICxV{\Gamma}}]{\ExcSbI}{\ICxV{\LockCxV{\Gamma}}<n>}$, see \ruleref{sb/exc-int}.
Finally, we add an inverse to this, see \ruleref{sb/exc-int-inv}, making $\LockCxV{\ICxV{\Gamma}}$
isomorphic to $\ICxV{\LockCxV{\Gamma}}<n>$, once again in accordance with the orthogonality
principle.

The case is entirely symmetrical for elements of the face lattice and the corresponding restricted
contexts. Concretely, given a face $\IsTm{\phi}{\F_m}$, we add a new face
$\IsTm[\LockCxV{\Gamma}]{\ExcCofInv{\phi}}{\F_n}<n>$, see \ruleref{face/exc}. Similarly to before,
we require this operation to be a morphism of bounded lattices and be lax natural, but we further
require that it matches with the corresponding operation on the interval via the equation
$\ExcCofInv{\EqBot{r}} = \EqBot{\ExcIntInv{r}}$. Thus, $\ExcCofInv{-}$ commutes with everything but
dimension variables, meaning that $\ExcCofInv{\phi}$ is precisely $\ExcIntInv{-}$ applied to every
dimension variable in $\phi$. Just as before, we can derive a substitution, to which we add an
inverse, making $\LockCxV{\RCxV{\Gamma}}$ and $\RCxV{\LockCxV{\Gamma}}[\ExcCofInv{\phi}]$
isomorphic, see \ruleref{sb/exc-face} and \ruleref{sb/exc-face-inv}.

As we will see shortly, these rules are sufficient for composition in modal types, but one may
still wonder if there would be merit in the addition of inverses to $\ExcIntInv{r}$ and
$\ExcCofInv{\phi}$; after all, this would be in line with our orthogonality principle. It turns out,
however, that such an addition would lead to a significant restriction of what models are valid, in
particular, it would invalidate our model of guarded recursion in \autoref{sec:examples}, and we thus
refrain from making such an addition.

As mentioned, the exchange operations respect the 2-categorical structure of the mode theory, and
since the exchange substitutions are derived from the simpler exchange operations, they inherit this
property. We now record one such coherence explicitly for future use. \CubicalMTT{} inherits a
\emph{weakening substitution} from \CTT{}: $\Mor[\WkF[\phi]]{\RCxV{\Gamma}}{\Gamma}$. One may show
that the two canonical substitutions
$\Mor{\RCxV{\LockCxV{\Gamma}}[\ExcCofInv{\phi}]}{\LockCxV{\Gamma}}$ are equal. Explicitly, we
equate the direct restriction substitution $\WkF[\ExcIntInv{\phi}]$ to $\LockSb{\WkF[\phi]}\circ
\ExcSbRinv$.

\subsection{Composition in modal types}
\label{sec:modal-comp}

Now we can tackle the problem of composition in \CubicalMTT{}. Composition is, as the other cubical
rules, added to the system locally and satisfies the same computation rules familiar from \CTT{}
for standard type formers. Modal types will support a computation principle similar to that of
inductive types, allowing us to commute $\MkBox{-}$ with $\comp$. Thus the status
of composition in modal types is similar to that of natural numbers, where composition is a formal
operation that reduces on canonical forms, as opposed to \eg{} dependent sums.

The desired `reduction' is \ruleref{term-eq/comp-mod}. We take a moment to show that the conclusion
of this rule is well-typed and that the result has the expected boundary.

Inspecting the assumptions of this rule, we note that all but one are equivalent to a
composition problem in $\Modify{A}$ where the input terms are of form $\MkBox{u}$ and
$\MkBox{u_0}$---with the exception that the assumption $\Sb{u}{0 / i} = u_0$ is slightly stronger
than
necessary---so it is clear that the left-hand side of this equality is well-typed. That the
right-hand side,
$\MkBox{\CompCommV{i}{A}{\ExcCofInv{\phi} \mapsto \Sb{u}{\ExcSbI \circ \ExcSbR}}{u_0}}$, is
well-typed is more subtle. Inspecting the rule for composition in \autoref{fig:ctt-rules}, we see
that we must first verify the following:
\begin{enumerate}
\item $\IsTm[\LockCxV{\Gamma}]{\ExcCofInv{\phi}}{\F_n}<n>$
\item $\IsTm[\ICxV{\RCxV{\LockCxV \Gamma}[\ExcCofInv{\phi}]}]{\Sb{u}{\ExcSbIinv \circ
\ExcSbRinv}}{\Sb{A}{\ExcSbIinv}}<n>$
\item $\IsTm[\LockCxV{\Gamma}]{u_0}{\Sb{A}{0/i}} <n>$
\item\label{item:iv} $\EqTm[\LockCxV{\RCxV \Gamma}]{\Sb{u}{0/i}}{u_0}{\Sb{A}{0 / i}}<n>$
\end{enumerate}
All of these are immediate results of the premises of \ruleref{term-eq/comp-mod}. In particular,
\autoref{item:iv} is precisely the aforementioned stronger premise.

Assured that both sides of \ruleref{term-eq/comp-mod} are well-typed, we show that that the
right-hand side of this equality satisfies the same boundary condition as the left-hand side, \ie{},
that the right-hand side is equal to $\Sb{\MkBox{u}}{1/i}$ under $\phi$.\footnote{%
  This requirement is a further sanity check on the rule; without this equality the right-hand side
  would not solve the same composition problem as the left and the equation would be highly suspect.
}

First, we observe that in context $\IsCx{\RCxV {\LockCxV{\Gamma}}[\ExcCofInv{\phi}]}<n>$ we have the
following:
\[
  \CompCommV{i}{A}{\ExcCofInv{\phi} \mapsto \Sb{u}{\ExcSbIinv \circ \ExcSbRinv}}{u_0}
  =
  \Sb{
    \Sb{u}{\ExcSbIinv \circ \ExcSbRinv}
  }{1 / i}
  =
  \Sb{\Sb{u} {1 / i}}{\ExcSbRinv}
\]

Next, we recall that weakening by an assumption of the face lattice commutes with face
exchange. Given that the former is silent in our notation and the latter is not, this leads to the
somewhat odd equation $\EqTm[\RCxV{\Gamma}]{\MkBox{m}}{\MkBox{\Sb{m}{\ExcSbR}}}{\Modify{A}}$ when
$\IsTm[\LockCxV{\Gamma}]{m}{A}$. Combining these two equations, we have the following in context
$\RCxV{\Gamma}$:
\begin{align*}
  \MkBox{\Sb{u}{1/i}} & =
  \MkBox{
    \Sb{
      \prn{\CompCommV{i}{A}{\ExcCofInv{\phi} \mapsto \Sb{u}{\ExcSbIinv \circ \ExcSbRinv}}{u_0}}
    }{
      \ExcSbR
    }
  } \\
  & = \MkBox{
    \CompCommV{i}{A}{\ExcCofInv \phi \mapsto \Sb u {\ExcSbIinv \circ \ExcSbRinv}}{u_0}
  }
\end{align*}

\begin{rem}
  The composition rule for modal $\Pi$-types is the same as for non-modal $\Pi$-types up to
  application of $\LockCxV{-}$ and use of exchange substitutions to make it well-typed.
\end{rem}

\subsection{Extensionality principles in \texorpdfstring{\CubicalMTT{}}{Cubical MTT}}
\label{sec:modal-ext}

Function extensionality in \CubicalMTT{} follows directly from function extensionality in \CTT{},
since the rules used in \CTT{} are all available mode-locally in \CubicalMTT{}.
We will prove \emph{modal extensionality}, which cannot be proven in \MTT{}:

\begin{thm}
  \label{thm:cmtt:modal-ext}
  Given $\mu : n \to m$ and $\IsTm {a , b} {A} <n>$, where $A$ is
  classified by the universe, there is an equivalence $\ModExt{a}{b} : \Modify{\Path{A}{a}{b}}
  \Equiv \Path{\Modify{A}}{\MkBox{a}}{\MkBox{b}}$.
\end{thm}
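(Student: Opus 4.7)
The plan is to construct explicit forward and backward maps using cubical exchange and then to verify the two composites equal the identity up to a path.

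For the forward map, given $m : \Modify{\Path{A}{a}{b}}$, I set
\[
  f(m) = \LetMod{m}[y]{\PathAbsV{\MkBox{\PathApp{y}{\ExcIntInv{i}}}}}.
\]
Under the $\MkBox$ the context is locked by $\mu$, at which point \ruleref{int/exc} supplies $\ExcIntInv{i} : \I_n$, making $\PathApp{y}{\ExcIntInv{i}} : A$ well-typed in the locked context. The endpoint conditions match because $\ExcIntInv{-}$ is a De~Morgan morphism (so $\ExcIntInv{0} = 0$ and $\ExcIntInv{1} = 1$), while $y : \Path{A}{a}{b}$ forces $\PathApp{y}{0} = a$ and $\PathApp{y}{1} = b$.

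For the backward map $g$, I use the context isomorphism $\LockCxV{\ICxV{\Gamma}} \cong \ICxV{\LockCxV{\Gamma}}$ provided by \ruleref{sb/exc-int} and \ruleref{sb/exc-int-inv}. By the adjunction underlying $\Modify{-}$, a term of $\Modify{\Path{A}{a}{b}}$ in $\Gamma$ corresponds semantically to a term of $A$ in $\ICxV{\LockCxV{\Gamma}}$ with endpoints $a$ and $b$; via $\ExcSbIinv$ this matches a term in $\LockCxV{\ICxV{\Gamma}}$. Syntactically I extract such data from $q$ by combining the modal eliminator applied to $\PathApp{q}{i}$ (in the extended context $\ICxV{\Gamma}$) with the inverse exchange substitution, then wrap the result in $\MkBox{-}$ of a path abstraction.

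Once both maps are defined, the composites are verified to equal the identity using the $\beta$ rules for $\LetMod$ and path application together with the inverse-pair equations for $\ExcSbI$ and $\ExcSbIinv$. Since modal types in \MTT{} lack judgmental $\eta$, the equalities hold only up to a (provable) path, which is enough for the equivalence. The main obstacle is constructing $g$: because $q$ is a non-modal variable at mode $m$, it cannot be placed directly under the $\mu$-lock, so the backward map must be threaded through the modal eliminator and the exchange substitutions rather than built by straightforward nesting of $\MkBox$ and $\PathAbsV{-}$. The universe hypothesis on $A$ enters here, ensuring that $\Modify{A}$ has Kan composition via \ruleref{term-eq/comp-mod}, so that the boundaries can be reconciled coherently when assembling the result.
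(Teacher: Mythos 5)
Your forward map is essentially the paper's: eliminate the box, abstract an interval variable, and use \ruleref{int/exc} to feed $\ExcIntInv{i}$ to the unboxed path; that part is fine. The genuine gap is the backward map. The ``adjunction underlying $\Modify{-}$'' is not available syntactically: \MTT{}'s modalities are only weak dependent right adjoints, and there is no unboxing or transposition operation, only the pattern-matching eliminator. Applying that eliminator to $\PathApp{q}{i}$ in context $\ICxV{\Gamma}$ does not produce what you need: its branch hands you a single variable $y$ of type $A$, annotated with $\mu$ and usable only behind the lock, with no judgmental link to the interval variable $i$ (the $\beta$-rule fires only when the scrutinee is literally of the form $\MkBox{u}$, and $\PathApp{q}{i}$ is not of that form when $q$ is a variable). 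In particular you cannot reassemble these pointwise unboxings into a path abstraction living under $\MkBox{-}$: the variable $q$ is an ordinary mode-$m$ variable, hence inaccessible behind $\LockCxV{\Gamma}$, and the exchange principles of \CubicalMTT{} apply only to interval and face variables, never to term variables. Commuting $\Modify{-}$ past path formation in this direction is exactly the content of the theorem, so no direct extraction of an ``underlying line'' from $q$ can succeed; the analogous modal extensionality statement is independent of \MTT{}, which is why a different argument is required.

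The paper defines $\ModExtInv{a}{b}$ by based path induction on $q : \Path{\Modify{A}}{\MkBox{a}}{\MkBox{b}}$ combined with a careful modal induction to put the generalized endpoint into $\MkBox$-form, reducing everything to the reflexivity case, where one simply returns $\MkBox{\Refl{a}}$. This is also where the universe hypothesis genuinely enters: it guarantees that $A$, and hence $\Modify{A}$ via \ruleref{term-eq/comp-mod}, carries composition, so that J is available; it is not used to ``reconcile boundaries'' of an extracted line. The homotopies are then checked by reducing to the $\MkBox{\Refl{c}}$ case and computing; they hold only up to a path because J in cubical type theory computes up to a path, not because modal types lack a judgmental $\eta$. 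Your verification sketch ($\beta$ for the eliminator plus the inverse-pair equations for the exchange substitutions) cannot get off the ground until the backward map is defined along these lines.
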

\begin{proof}
  We define the map $\ModExt{a}{b}$ and show it to be an equivalence by constructing an inverse.
  Fix $\IsTm{m}{\Modify{\Path{A}{a}{b}}}$ and $\IsTm{r}{\I_m}$. We wish to construct
  $\ModExt{a}{b}\prn{m}\prn{r}$.
  By modal induction it suffices to consider the case where
  $m = \MkBox{p}$ for some $\IsTm [\LockCxV \Gamma] p {\Path {A} a b} <n>$.
  Because $p$ lives in a locked context whereas $r$ does not, we need an exchange operation.
  We form $\IsTm [\LockCxV \Gamma] {\ExcIntInv r} {\I _ n} <n>$, and define
  $\PathApp {\App {\ModExt a b} m} r = \MkBox {\PathApp p {\ExcIntInv r}}$. Towards verifying
  that we obtain an inverse using path induction, note that for $\IsTm [\LockCxV \Gamma] c
  {A} <n>$ we have that
  \[
    \App {\ModExt c c} {\MkBox {\Refl c}} = \Refl {\MkBox c}.
  \]

  Next we define a map $\ModExtInv{a}{b}$ in the inverse direction.\footnote{We will only need path
    induction and modal induction rather than path abstraction to define $\ModExtInv{a}{b}$, meaning
    that it can also be defined in \MTT{}.} By based path induction along with careful modal
  induction, it suffices to define only $\App{\ModExtInv{a}{a}}{\Refl{\MkBox{a}}}$. In this case we
  define $\App {\ModExtInv{a}{a}}{\Refl{\MkBox{a}}} = \MkBox {\Refl a}$.
  For later use, we calculate for $\IsTm [\LockCxV \Gamma] c {A} <n>$ that
  \[
    \App {\ModExtInv{c}{c}} {\Refl {\MkBox c}} \PathEq \MkBox {\Refl c}.
  \]
  We obtain only a path rather than a judgmental equality because path induction computes
  only up to a higher path in cubical type theory.

  Lastly, we prove that these maps form an equivalence. Let
  $\IsTm m {\Modify {\Path {A} a b}}$. We are to find a path between
  $\App {\ModExtInv a b} {\App {\ModExt a b} m}$ and $m$. It suffices to do so when
  $m = \MkBox {\Refl c}$ for some $\IsTm [\LockCxV \Gamma] c {A} <n>$ where we compute:
  \[
    \App {\ModExtInv c c} {\App {\ModExt c c} m}
    = \App {\ModExtInv c c} {\Refl {\MkBox c}} \PathEq m.
  \]

  For the reverse direction, let $\IsTm p {\Path {\Modify {A}} {\MkBox a} {\MkBox b}}$. We need a
  path between $p$ and $\App {\ModExt a b} {\App {\ModExtInv a b} p}$. We again reduce to the case
  where $p = \Refl {\MkBox c}$ and compute from there:
  $\App {\ModExt c c} {\App {\ModExtInv c c} p} \PathEq \App {\ModExt c c} {\MkBox {\Refl c}} = p$.
\end{proof}

\section{Semantics of \texorpdfstring{\CubicalMTT}{Cubical MTT}}
\label{sec:semantics}

Section~\ref{sec:cubical-mtt} toured through \CubicalMTT{} informally, but in fact, \CubicalMTT{}
can be
presented as a particular \emph{generalized algebraic theory}~\cite{cartmell:1978}. This
automatically gives rise to a category of models---a variant of the standard \emph{categories with
  families}~\cite{dybjer:1996}---with several desirable properties such as the initiality of
syntax. However, \CubicalMTT{} is complex and the induced definition of model is nearly intractable
to manipulate, let alone construct.

We fracture the definition of model into more manageable pieces, making heavy use of the natural
models of \MTT{}~\cite{awodey:2018,gratzer:journal:2021}. In order to construct these models, we
introduce \emph{cubical \MTT{} cosmoi}. This is a more compact structure supplementing \MTT{}
cosmoi~\cite{gratzer:normalization:2022} with the ingredients necessary to internally construct a model of
\CTT{}~\cite{orton:2018,licata:2018}. In practice, cosmoi are easier to obtain and suffice for the
most important models \eg{} those in cubical presheaves.

\subsection{Models of \texorpdfstring{\CubicalMTT}{Cubical MTT}}
\label{sec:semantics:models}

We now present the definition of a model of \CubicalMTT{} with mode theory $\Mode$. To begin with,
we require a strict 2-functor $\Mor[\Interp{-}]{\Coop{\Mode}}{\CAT}$, known as the \emph{modal
context structure}. Intuitively, this 2-functor assigns each mode to a category of contexts. From
this viewpoint, the functor $\Mor[\Interp{\mu}]{\Interp{m}}{\Interp{n}}$ sends a morphism
$\Mor[\mu]{n}{m}$ to the adjoint action $\LockCx{-}$ contexts and the 2-cells $\Interp{\alpha}$
interpret the natural transformations $\Key{\alpha}{}$. We now specify the remaining structure on
top of this functor.

\paragraph*{Mode-local structure}

Each mode $\Interp{m}$ should contain a complete model of \CTT{}, and we specify this in the
language of natural models~\cite{awodey:2018} which provides a concise description of the
connectives of type theory.

As a model of \CTT{}, $\Interp{m}$ has an interval object $\IntOb{m} : \Interp{m}$. Just as in
\CTT{}, we require that $\IntOb{m}$ is a De Morgan algebra and that all products $- \times
\IntOb{m}$ exist.

Next, we require a pair of presheaves $\SynTm{m},\SynTy{m} : \PSH{\Interp{m}}$ representing
respectively the collection of terms and types in a given context. Moreover, there is a projection
map $\Mor[\SynEl{m}]{\SynTm{m}}{\SynTy{m}}$ which sends a term to its type. This universe is closed
under dependent sums, products, \etc{} Each mode also contains an interpretation of the face lattice
$\FPSh : \PSH{\Interp{m}}$ and face restriction which is used to specify the composition
operations. While complex, this piece of the model is unchanged from \CTT{} so we relegate further
details to \autoref{app:models}.

\paragraph*{Modal types}
Next we turn to the modal aspect of a model: modal context extension and modal types. Both of these
structures are specified exactly as in \MTT{}~\cite{gratzer:journal:2021}, with the small caveat
that we require an additional equality for composition operation on modal types.

\paragraph*{Cubical exchange}
Finally, we must address the interaction of the functors $\Interp{\mu}$ and the intervals and face
lattices. Mirroring the syntax, we require natural transformations
$\Mor[\IntArr{\mu}]{\Yo{\IntOb{m}}}{\Pre{\Interp{\mu}}{\Yo{\IntOb{n}}}}$ that are pointwise
morphisms of De Morgan algebras and that assemble with $\Yo{\IntOb{m}}$ into a lax natural
transformation. From this, we can define a morphism, which we require to be have an inverse:
\[
  \Mor[\Pair{\Interp{\mu}\pi_1}{\IntArr{\mu,\Gamma\times\IntOb{m}}\prn{\pi_2}}]
  {\Interp{\mu}\prn{\Gamma\times\IntOb{m}}}{\Interp{\mu}\prn{\Gamma}\times\IntOb{n}}
\]

The above is replayed for face lattices: We require natural transformations
$\Mor[\FPSH*{\mu}]{\FPSH{m}}{\Pre{\Interp{\mu}}{\FPSH{n}}}$ that are pointwise morphisms of bounded
lattices and that assemble with $\FPSH{m}$ into a lax natural transformation. From this can be
defined a canonical morphism
$\Mor{\Interp{\mu}\prn{\Restr{m}{\Gamma}{\phi}}}{\Restr{n}{\Interp{\mu}}{\FPSH*{\mu}{\Gamma}{\phi}}}$,
 which we require has an inverse.

\subsection{Cubical \MTT{} cosmoi}
\label{sec:semantics:cosmoi}

Even after the repackaging of models detailed in \autoref{sec:semantics:models}, a model of
\CubicalMTT{} is still a complex object. There are two orthogonal aspects to this complexity: (1)
constructing the models of cubical type theory in each mode and (2) constructing the network of
modalities and their actions on contexts. Fortunately, there already exists a technique to simplify
(1); rather than construct a model of cubical type theory directly, \cite{orton:2018} and
\cite{licata:2018} have shown that any topos satisfying a handful of axioms supports a model of
cubical type theory. Moreover, (2) is partially addressed in \cite{gratzer:normalization:2022} by the notion of an
\MTT{} cosmos which abstracts several of the difficulties of constructing a model of \MTT{}. We now
unify these two ideas to define \emph{cubical \MTT{} cosmoi} and prove that they induce a model of
\CubicalMTT{}.

\paragraph*{\MTT{} cosmoi}

We will first recall the definition of \MTT{} cosmoi and prove that they induce models of \MTT{}.

\begin{defi}
  A cosmos is a pseudofunctor $\Mor[F]{\Mode}{\CAT}$ that takes objects to locally cartesian closed
  categories and morphisms to right adjoints. We denote the left adjoint to $F \prn \mu$ by
  $F _ !  \prn \mu$.
\end{defi}

A cosmos abstracts from the basic situation we encountered in \autoref{sec:semantics:models}: a
2-functor $F$ picking out categories of contexts and the actions of modalities between them. In this
case, we were primarily concerned not with the category $F\prn{m}$, but with \emph{presheaves over
  $F\prn{m}$}. After all, it is the category of presheaves which hosts types and terms and where we
formulate structures like context extension. Careful inspection reveals that we only require the
locally Cartesian closed structure of $\PSH{F\prn{m}}$ when formulating the rest of the structure
of a model, so it is natural to require only that each mode of a cosmos is locally Cartesian
closed. Indeed, on top of this skeleton we can transport more of the structure of a model to cosmoi:

\begin{defi}
  An extensional \MTT{} cosmos is a cosmos $F$ such that each mode is equipped with a morphism
  $\Mor[\SynEl m]{\SynTm m}{\SynTy m}$ inducing a universe closed under dependent products, sums,
  booleans, and extensional identity types. We further require that each map
  $\Mor[F\prn{\mu}]{F\prn{n}}{F\prn{m}}$ induce a \emph{dependent right
    adjoint}~\cite{birkedal:2020,gratzer:journal:2021}.\footnote{These are essentially the same
    as modal types in \MTT{}, further equipped with a syntactically ill-behaved but semantically
    convenient elimination rule.}
\end{defi}

We have leveraged the same intuition as natural models to regard $\SynTy m$
(respectively $\SynTm{m}$)
as the collection of types (resp. terms), but without any representability requirements (they cannot
be stated in LCCCs). Requiring closure of these universes under the connectives of \MTT{} then
ensures that an \MTT{} cosmos induces a model of \MTT{} in the sense of
\cite{gratzer:journal:2021}. Prior to proving this, however, we require the following standard
category-theoretic fact:
\begin{lem}\label{lemma:adjoint_pseudofunctor}
  Let $\CC$ be a 2-category and $\Mor[F]{\CC}{\CAT}$ be a pseudofunctor such that each $F\prn{f}$ is
  a right adjoint $F_!\prn{f} \Adjoint F\prn{f}$ then the left adjoints extend to a pseudofunctor
  $\Mor[F_!]{\Coop{\CC}}{\CAT}$.
\end{lem}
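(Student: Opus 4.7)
The plan is to invoke the classical theory of mates for adjunctions in a 2-category. First, for each 1-cell $\Mor[f]{n}{m}$ in $\CC$ I fix a specific adjunction $F_!\prn{f} \Adjoint F\prn{f}$ with chosen unit $\eta_f$ and counit $\varepsilon_f$; when $f = \ArrId{n}$ I take $F_!\prn{f} = \ArrId{F\prn{n}}$ with (co)unit given by the pseudofunctoriality isomorphism of $F$. This pins down the action of $F_!$ on objects and 1-cells, and the 1-cell composition direction is reversed relative to $F$---which is exactly the $\op$ built into $\Coop{}$.

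Next, I would define the action on 2-cells via the mate correspondence: given $\alpha \colon f \Rightarrow g$ in $\CC$, the whiskering of $F\prn{\alpha} \colon F\prn{f} \Rightarrow F\prn{g}$ with $F_!\prn{g}$, $F_!\prn{f}$, $\eta_f$, and $\varepsilon_g$ yields a 2-cell $F_!\prn{g} \Rightarrow F_!\prn{f}$, and the reversal in direction corresponds precisely to the $\co$ in $\Coop{}$. The pseudofunctor structure isomorphisms $F_!\prn{g \circ f} \cong F_!\prn{f} \circ F_!\prn{g}$ then come from uniqueness of adjoints: since $F$ is a pseudofunctor we have $F\prn{g \circ f} \cong F\prn{f} \circ F\prn{g}$, and the composite adjunction $F_!\prn{f} \circ F_!\prn{g}$ is left adjoint to $F\prn{g} \circ F\prn{f}$ and hence to $F\prn{g \circ f}$, so it is canonically isomorphic to $F_!\prn{g \circ f}$.

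The remaining work is to verify the pseudofunctor coherence axioms. These follow from the standard mates calculus: the mate construction is functorial under both vertical and horizontal composition of 2-cells, preserves identities, and interacts with the pseudofunctoriality constraints of $F$ in the expected way. I expect the main obstacle to be purely bureaucratic---chasing the pentagon and unit coherences through the chosen structure isomorphisms and through the pasting definition of the mates---rather than conceptually novel. This is the folklore fact that taking pointwise left adjoints of a right-adjoint-valued pseudofunctor yields a pseudofunctor on the co-opposite 2-category, and the result is used implicitly in the \MTT{} semantics literature on which this paper builds.
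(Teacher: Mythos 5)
The paper gives no proof of this lemma at all---it is invoked as a ``standard category-theoretic fact''---and your mates-calculus argument (chosen adjunctions on 1-cells, mates for 2-cells, structure isomorphisms from uniqueness of left adjoints composed with the constraints of $F$, coherence from functoriality of the mate correspondence) is precisely the standard proof that fills this gap correctly. One small slip: you write $F\prn{g \circ f} \cong F\prn{f} \circ F\prn{g}$ where the covariant pseudofunctoriality constraint is $F\prn{g \circ f} \cong F\prn{g} \circ F\prn{f}$, but your subsequent use of the composite adjunction $F_!\prn{f} \circ F_!\prn{g} \Adjoint F\prn{g} \circ F\prn{f}$ is the correct one, so the argument goes through.
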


\begin{thm}\label{thm:MTT_cosmos}
  An extensional \MTT{} cosmos induces a model of extensional \MTT{} with modal context structure,
  which is pseudonaturally equivalent to the pseudofunctor of left adjoints induced by
  Lemma~\ref{lemma:adjoint_pseudofunctor}. If the pseudofunctor of left adjoints is a strict
  2-functor, the modal context structure may instead be chosen to be equal to it.
\end{thm}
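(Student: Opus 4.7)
The plan is to mirror the cosmos-to-model construction of \cite{gratzer:2021} specialized to the extensional setting, where identity types are well-behaved and no cubical machinery needs to be accommodated. The argument proceeds in three stages: extract a pseudofunctor of left adjoints, strictify it, and then transport the universe and dependent-right-adjoint data of the cosmos into the natural-model presentation of Section~\ref{sec:semantics:models}.

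First I would apply Lemma~\ref{lemma:adjoint_pseudofunctor} to the cosmos $F$ to obtain the pseudofunctor $\Mor[F_!]{\Coop{\Mode}}{\CAT}$; here $F_!(\mu)$ is destined to interpret the lock action $\LockCx{-}$ on contexts, and the 2-cells of $F_!$ give rise to the substitutions $\Key{\alpha}{}$. Because the definition of model in Section~\ref{sec:semantics:models} demands a strict 2-functor, I would next invoke a standard strictification theorem for $\CAT$-valued pseudofunctors to replace $F_!$ by a pseudonaturally equivalent strict 2-functor $\Interp{-}$, which becomes the modal context structure. In the fortunate case where $F_!$ is already strict we simply set $\Interp{-} = F_!$, which yields the stronger equality clause of the theorem. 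The remaining data then transfers essentially by inspection: the universe $\Mor[\SynEl_m]{\SynTm_m}{\SynTy_m}$ in each $F(m) \simeq \Interp{m}$ supplies the mode-local natural-model data, its closure under dependent products, sums, booleans, and extensional identity types discharges the mode-local connectives of \MLTT{}, and the dependent right adjoint attached to each $F(\mu)$ supplies the modal type former $\Modify{-}$ with its introduction, elimination, and computation rules, exactly as in~\cite{birkedal:2020,gratzer:journal:2021}.

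The main obstacle is coherence. After strictification one must confirm that the transported $\Modify{-}$ and its associated substitutions strictly satisfy the equational axioms imposed by a natural model of \MTT{}, and that the interpretations of 2-cells continue to commute with everything in sight. Precisely these coherences are handled in the cosmos-to-model construction of~\cite{gratzer:journal:2021}; since the extensional setting imposes no further demands beyond what is already provided by an extensional \MTT{} cosmos, the bulk of the proof is to verify that their argument goes through unchanged, with the universe and dependent-right-adjoint structure assumed here in place of the corresponding structure in the cited work.
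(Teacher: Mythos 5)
Your proposal is correct and follows essentially the same route as the paper: apply Lemma~\ref{lemma:adjoint_pseudofunctor}, strictify the resulting pseudofunctor of left adjoints into the modal context structure (taking it as-is when already strict), and transfer the cosmos's universe and dependent right adjoints along the pseudonatural equivalence, concluding by the model-construction theorem of \cite{gratzer:journal:2021}. The only step the paper makes explicit that you leave implicit is how the universe morphism in each locally cartesian closed mode becomes natural-model data: it is Yoneda-embedded along the equivalence, with representability supplied by finite completeness and closure under the connectives following because both the equivalence and the Yoneda embedding preserve the locally cartesian closed structure.
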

\begin{proof}
  Fix an extensional \MTT{} cosmos $\Mor[F]{\Mode}{\CAT}$.  By
  Lemma~\ref{lemma:adjoint_pseudofunctor},
  the left adjoints $F _ ! \prn \mu$ assemble into a pseudofunctor $\Mor[F _ !]{\Coop
    \Mode}{\CAT}$. We may strictify this functor to get a strict 2-functor
  $\Mor[\widehat F _ !]{\Coop \Mode}{\CAT}$ and a pseudonatural equivalence of categories
  $\Mor[\alpha]{F _ !}{\widehat F _ !}$. We claim that $\widehat{F}_!$ models extensional \MTT{}.

  For each mode $m$, we define the universe of types and terms as the Yoneda embedding of the
  universe already present in $\widehat{F}_!\prn{m}$:
  $\widehat {\SynEl m} = \Yo {\alpha _ m \prn {\SynEl m}}$.  Because $\widehat{F}_!\prn{m}$ is
  finitely complete, this is a representable natural transformation. Moreover, since both $\alpha_m$
  and $\Yo$ preserve LCCC structure, this universe is closed under the types in the types cosmos:
  dependent right adjoints for each $\mu$, dependent sums, products, booleans, and extensional
  identity types \etc{} Thus by \cite[Theorem
  7.1]{gratzer:journal:2021} $\widehat{F}_!$
  models
  extensional \MTT{}.
\end{proof}

\paragraph*{Cubical cosmoi}

An \MTT{} cosmos $F$ interprets each mode as an LCCC $F\prn{m}$ because locally Cartesian closed
structure is sufficient to specify the connectives of \MTT{}. Unfortunately, it is not sufficient to
apply the techniques of \cite{orton:2018} and \cite{licata:2018} and internally construct a model of
\CTT{}. We therefore isolate the notion of \emph{a LOPS topos},\footnote{Named after the authors of
  \cite{licata:2018}} containing precisely the required structure. We further define cubical cosmoi
as particularly well-behaved networks of LOPS topoi.

\begin{defi}
  A LOPS topos is an elementary topos $\EE$ with a hierarchy of universes, an object of cofibrations
  $\EmbMor{\LOPSFace _ {\EE}}{\Omega}$ and a tiny interval object $\LOPSInt _ {\EE}$ subject to the
  Orton-Pitts axioms.\footnote{%
    In fact, we make use of a slight strengthening of axioms presented by \cite{orton:2018} in order
    to ensure that $\LOPSInt_\EE$ is an internal De Morgan algebra rather than a connection algebra.
  }
\end{defi}

\begin{thmC}[\cite{licata:2018}]
  \label{thm:lops}
  There exists a model of \CTT{} in every LOPS topos.
\end{thmC}

Consider a cosmos $\Mor[F]{\Mode}{\CAT}$ such that each $F\prn{m}$ is a LOPS topos.
Theorem~\ref{thm:lops}
then implies that each mode is a model of cubical type theory, but on its own this is insufficient
to conclude that $F$ assembles into a model of \CubicalMTT{}; we must ensure that each $F\prn{\mu}$
properly preserves interval objects and face lattices. In order to isolate what further properties
we must impose on $F$, we briefly revisit how one interprets constructs in cubical type theory such
as systems and face restrictions in a LOPS topos.

Extending a context $X : \EE$ by an interval variable is given by the product:
$X \times \LOPSInt_\EE$. The
structure of this context extension and of dimension terms more generally follows directly from the
universal property of products along with the De Morgan algebra structure on $\LOPSInt_\EE$; a
dimension term in context $X$ is realized as a morphism $\Mor{X}{\LOPSInt_\EE}$. Similarly, an
element of the face lattice in context $X$ is interpreted as a
morphism. $\Mor{X}{\LOPSFace_\EE}$. Restricting a context by such a face is given by
pullback:\footnote{We have used the familiar set-comprehension notation for restriction by a
  face. Because $\LOPSFace_\EE$ is a subobject of $\Omega$, this coincides with the standard
  interpretation of this notation in a topos.}
\[
  \DiagramSquare{
    height = 1.5cm,
    width = 3cm,
    nw = \Compr{X}{\phi},
    nw/style = pullback ,
    ne = \ObjTerm{},
    sw = X,
    se = \LOPSFace _ {\EE} ,
    east = \top,
    south = \phi ,
  }
\]

Returning to our original question, we can now isolate some of the additional structure required by
a cosmos valued in LOPS topoi to induce a model of \CubicalMTT{}. In particular, a right adjoint
between LOPS topoi will correctly model a dependent right adjoint which appropriately respects
cubical structure when its left adjoint satisfies the following conditions:
\begin{defi}
  A morphism of LOPS topoi is a geometric morphism $\Mor[F_! \Adjoint F]{\EE}{\EE'}$ along with the
  following:
  \begin{enumerate}
  \item An isomorphism of De Morgan algebras
    $\alpha_F : F_!\prn{\LOPSInt_{\EE'}} \cong \LOPSInt_\EE$
  \item A factorization of the canonical map $F_!\prn{\Omega} \to \Omega$ through the cofibration
    classifiers $\beta_F : F_!\prn{\LOPSFace_{\EE'}} \to \LOPSFace_\EE$ such that $\beta_F$ commutes
    with quantification over the interval.
  \end{enumerate}
  The maps $\alpha_F$ and $\beta_F$ are required to be compatible \ie{},
  $\beta_F \circ F_!(-=0) = (-=0) \circ \alpha_F$.
\end{defi}

\begin{rem}
  Whilst $F\prn{\LOPSInt_\EE}$ is a De Morgan algebra, we make no assumption that it be isomorphic
  to $\LOPSInt_{\EE'}$. Doing so would correspond to adding an inverse to \ruleref{int/exc}, but as
  mentioned in \autoref{sec:cub-exchange}, this is not valid in the model of guarded recursion in
  \autoref{sec:examples}; explicity, the right adjoint ``later'' does not preserve the
  interval.
  The case is the same for the object of cofibrations.
\end{rem}

We now have built up the machinery
necessary to define the desired fusion of LOPS topoi and \MTT{} cosmoi:
\begin{defi}
  A cubical \MTT{} cosmos $\Mor[F]{\Mode}{\CAT}$ is an extensional \MTT{} cosmos satisfying the
  following additional restrictions:
  \begin{itemize}
  \item $F \prn m$ is a LOPS topos for each mode $m$,
  \item $F\prn{\mu}$ is a morphism of LOPS topoi for each modality $\mu$,
  \item The interval and face lattice maps are pseudonatural.
  \end{itemize}
\end{defi}

\begin{thm}\label{thm:cubical-MTT-cosmos}
  Any cubical \MTT{} cosmos $F$ induces a model of \CubicalMTT{} with modal context structure
  pseudonaturally equivalent to the pseudofunctor of left adjoints induced by
  Lemma~\ref{lemma:adjoint_pseudofunctor}. If the pseudofunctor of left adjoints is a strict
  2-functor, the modal context structure may instead be chosen to be equal to it.
\end{thm}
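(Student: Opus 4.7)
The plan is to layer the cubical structure on top of Theorem~\ref{thm:MTT_cosmos}. I would first replay the strictification from that proof: apply Lemma~\ref{lemma:adjoint_pseudofunctor} to obtain the pseudofunctor of left adjoints $\Mor[F _ !]{\Coop \Mode}{\CAT}$, strictify it to $\widehat F _ !$ with a pseudonatural equivalence $\Mor[\alpha]{F _ !}{\widehat F _ !}$, and take $\widehat F _ !$ as the modal context structure. The closing clause of the theorem is then automatic: if $F _ !$ is already a strict 2-functor, the strictification step is vacuous.

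For mode-local structure, each $F \prn m$ is a LOPS topos, so Theorem~\ref{thm:lops} supplies an interval object, face lattice, systems, and Kan composition inside $F \prn m$. Transporting these along $\alpha _ m$ and combining with the universe data $\Mor[\SynEl m]{\SynTm m}{\SynTy m}$ already provided by the extensional \MTT{} cosmos structure yields the mode-local portion of the model in the sense of \autoref{sec:semantics:models}. The modal-type fragment and its associated context extension are inherited directly from Theorem~\ref{thm:MTT_cosmos}.

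The genuinely new content is interpreting the cubical exchange rules. For each $\Mor[\mu]{n}{m}$, the morphism-of-LOPS-topoi data attached to $F \prn \mu$ provides isomorphisms of De Morgan algebras and of bounded lattices $F _ ! \prn \mu \prn{\LOPSInt _ {F \prn m}} \cong \LOPSInt _ {F \prn n}$ and $F _ ! \prn \mu \prn{\LOPSFace _ {F \prn m}} \cong \LOPSFace _ {F \prn n}$. Transposing these across the adjunction $F _ ! \prn \mu \Adjoint F \prn \mu$ and passing through Yoneda furnishes the natural transformations $\IntArr{\mu}$ and $\FPSH*{\mu}$ required by \autoref{sec:semantics:models}. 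The algebra-morphism conditions transpose pointwise, the compatibility clause $\beta _ F \circ F _ ! (-=0) = (-=0) \circ \alpha _ F$ precisely encodes $\ExcCofInv{\EqBot{r}} = \EqBot{\ExcIntInv{r}}$, and the canonical comparison maps on $\Interp{\mu}\prn{\Gamma \times \IntOb{m}}$ and on face restrictions are invertible because the data are isomorphisms to begin with and the relevant pullbacks are preserved by $F _ ! \prn \mu$. Pseudonaturality, imposed in the definition of cubical \MTT{} cosmos, supplies the required coherence with 2-cells and survives the strictification of $F _ !$ up to equivalence.

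The step I expect to be the main obstacle is verifying \ruleref{term-eq/comp-mod}, the equation governing composition in modal types. This is the unique rule of \CubicalMTT{} not inherited wholesale from \MTT{} or \CTT{} in isolation, and semantically it asserts that Kan composition respects the exchange isomorphisms. The key lever is that $\alpha _ {F \prn \mu}$ and $\beta _ {F \prn \mu}$ are genuine isomorphisms, so transporting the input data of a composition problem through them and their inverses washes out, reducing the equation to the internal naturality of Kan composition under an isomorphism of its inputs, which is a routine consequence of the Orton--Pitts axiomatisation. Once this equation is checked, the remaining requirements on the modal context structure are precisely those discharged in Theorem~\ref{thm:MTT_cosmos}.
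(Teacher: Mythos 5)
Your overall skeleton matches the paper's proof: strictify the pseudofunctor of left adjoints via Lemma~\ref{lemma:adjoint_pseudofunctor} (with the strict case trivializing the last clause), obtain the mode-local cubical structure from Theorem~\ref{thm:lops} since each mode is (equivalent to) a LOPS topos, and build the exchange transformations from the isomorphisms $\alpha$, $\beta$ of the LOPS-topos morphisms, with invertibility of the comparison maps coming from left-exactness of the left adjoints. Up to that point your argument is essentially the paper's.

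The gap is in your last step, the composition structure on modal types. You treat \ruleref{term-eq/comp-mod} as an equation to be \emph{verified} between two already-existing Kan compositions, and claim it reduces to ``naturality of Kan composition under an isomorphism of its inputs,'' a consequence of the Orton--Pitts axioms. But Theorem~\ref{thm:lops} only supplies composition for the mode-local type formers inside a single topos; it says nothing about the modal type $\Modify{A}$, whose two ``sides'' in \ruleref{term-eq/comp-mod} live across two different topoi connected by $F\prn{\mu}$. A right adjoint between LOPS topoi does not automatically carry fibrant objects to fibrant objects, so in your account there is simply no composition operation on $\Modify{A}$ yet, and hence nothing to verify the equation against. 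The paper's key observation, which your proposal misses, is that the extensional \MTT{} cosmos structure makes the semantic modal type a \emph{dependent right adjoint}: $\MkBox{-}$ has an inverse, so every element of a modal type is of the form $\MkBox{a}$. Consequently \ruleref{term-eq/comp-mod} is not checked but \emph{taken as the definition} of composition on $\Modify{A}$ --- unbox the inputs, push them through the exchange isomorphisms, compose in $A$, and rebox --- with the exchange principles you already constructed guaranteeing this is well-typed. Your phrase about ``transporting the input data through the isomorphisms'' gestures at the right recipe, but without invoking the invertibility of $\MkBox{-}$ it does not yield a total composition structure, and the appeal to Orton--Pitts at this point is not a valid substitute.
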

\begin{proof}
  A cubical \MTT{} cosmos is in particular an extensional \MTT{} cosmos, meaning that all the rules
  from extensional \MTT{} can be modelled with the strictified 2-functor $\widehat{F}_!$ by
  Theorem~\ref{thm:MTT_cosmos}. Equivalence preserves being a LOPS topos, and thus
  $\widehat{F}_!(m) \simeq F_!(m) = F(m)$ is a LOPS topos, implying we can model all the mode-local
  rules added from \CTT{} (including composition structures for all non-modal types) with
  Theorem~\ref{thm:lops}. It thus remains to construct the exchange principles and composition
  structures on modal types.

  We claim that $\widehat{F}_!$ (or rather, the pseudofunctor of right adjoints $\widehat{F}$
  induced by the dual of Lemma~\ref{lemma:adjoint_pseudofunctor}) also has the cubical components
  of being a cubical \MTT{} cosmoi. We have already argued that $\widehat{F}(m)$ is a LOPS
  topos since it is equivalent to $F(m)$, the fact the naturality squares of these equivalences
  commute up to natural isomorphism is enough to show that $\widehat{F}\prn{\mu}$ is a morphism of
  LOPS topoi, and the pseudonatural coherence of the isomorphisms is preserved since the
  equivalences cohere pseudonaturally.

  As a consequence of this, we have at each mode $m : \Mode$ a De Morgan algebra and a bounded
  distributive lattice $\LOPSInt_{\widehat{F}\prn{m}} , \LOPSFace_{\widehat{F}\prn{m}} :
  \widehat{F}\prn{m}$ and for each modality $\Mor[\mu]{n}{m}$ coherent structure-preserving
  maps $\alpha_{\widehat{F}\prn{\mu}} :
  \widehat{F}_!\prn{\mu}\prn{\LOPSInt_{\widehat{F}\prn{m}}} \cong \LOPSInt_{\widehat{F}\prn{n}}$
  and $\beta_{\widehat{F}\prn{\mu}} : \widehat{F}_!\prn{\mu}\prn{\LOPSFace_{\widehat{F}\prn{m}}}
  \to \LOPSFace_{\widehat{F}\prn{n}}$.

  To define the interval exchange operation, take a dimension term
  $\Mor[r]{\Gamma}{\LOPSInt_{\widehat{F}\prn{m}}}$, and define $\IntArr{\mu,\Gamma}\prn{r}$ as the
  composite:
  \[\begin{tikzpicture}[diagram]
    \node (W) {$\widehat{F}_!\prn{\mu}\prn{\Gamma}$};
    \node [right = 3.6cm of W] (M) {$\widehat{F}_!\prn{\mu}\prn{\LOPSInt_{\widehat{F}\prn{m}}}$};
    \node [right = 3.6cm of M] (E) {$\LOPSInt_{\widehat{F}\prn{n}}$};
    \path [->] (W) edge node [above] {$\widehat{F}_!\prn{\mu}\prn{r}$} (M);
    \path [->] (M) edge node [above] {$\alpha_{\widehat{F}\prn{\mu}}$} (E);
  \end{tikzpicture}\]

  The naturality of $\IntArr{\mu}$ follows from the functoriality of $\widehat{F}_!\prn{\mu}$, and
  they cohere lax naturally since the isomorphisms cohere. To see that it defines morphisms of De
  Morgan algebras, consider the concretely the case of $\land$. Preservation is then the
  commutativity of the following diagram:
  \[\begin{tikzpicture}[diagram]
    \node (SWW) {$\widehat{F}_!\prn{\mu}\prn{\Gamma}$};
    \node [right = 5cm of SWW] (SW)
      {$\widehat{F}_!\prn{\mu}\prn{\LOPSInt_{\widehat{F}\prn{m}}\times\LOPSInt_{\widehat{F}\prn{m}}}$};
    \node [right = 4.5cm of SW] (SE) {$\widehat{F}_!\prn{\mu}\prn{\LOPSInt_{\widehat{F}\prn{m}}}$};
    \node [right = 3cm of SE] (SEE) {$\LOPSInt_{\widehat{F}\prn{n}}$};
    \node [above = of SW] (NW)
      {$\widehat{F}_!\prn{\mu}\prn{\LOPSInt_{\widehat{F}\prn{m}}}\times\widehat{F}_!\prn{\mu}\prn{\LOPSInt_{\widehat{F}\prn{m}}}$};
    \node [above = of SEE] (NEE)
      {$\LOPSInt_{\widehat{F}\prn{n}}\times\LOPSInt_{\widehat{F}\prn{n}}$};
    \path [->] (SWW) edge node [below] {$\widehat{F}_!\prn{\mu}\prn{\Pair{r}{s}}$} (SW);
    \path [->] (SW) edge node [below] {$\widehat{F}_!\prn{\mu}\prn{\land}$} (SE);
    \path [->] (SE) edge node [below] {$\alpha_{\widehat{F}\prn{\mu}}$} (SEE);
    \path [->] (SWW) edge node [left, xshift=-9pt]
      {$\Pair{\widehat{F}_!\prn{\mu}\prn{r}}{\widehat{F}_!\prn{\mu}\prn{s}}$} (NW);
    \path [->] (SW) edge node [right]
      {$\Pair{\widehat{F}_!\prn{\mu}\prn{\pi_1}}{\widehat{F}_!\prn{\mu}\prn{\pi_2}}$} (NW);
    \path [->] (NW) edge node [above]
      {$\alpha_{\widehat{F}\prn{\mu}}\times\alpha_{\widehat{F}\prn{\mu}}$} (NEE);
    \path [->] (NEE) edge node [right] {$\land$} (SEE);
  \end{tikzpicture}\]
  The right rectangle commutes since the $\alpha_{\widehat{F}\prn{\mu}}$ preserves $\land$, and the
  left triangle commutes by the uniqueness of morphisms to products. Preservation of the other
  connectives follow similarly.

  The final thing to verify for intervals is that the uniquely determined dashed arrow in the
  following diagram has an inverse:
  \[\begin{tikzpicture}[diagram]
    \node (SW) {$\widehat{F}_!\prn{\mu}\prn{\Gamma\times\LOPSInt_{\widehat{F}\prn{m}}}$};
    \node [right = 4.5cm of SW] (SM) {$\widehat{F}_!\prn{\mu}\prn{\LOPSInt_{\widehat{F}\prn{m}}}$};
    \node [right = 3cm of SM] (SE) {$\LOPSInt_{\widehat{F}\prn{n}}$};
    \node [above = of SW] (NW) {$\widehat{F}_!\prn{\mu}\prn{\Gamma}$};
    \node [above = of SE] (NE)
      {$\widehat{F}_!\prn{\mu}\prn{\Gamma}\times\LOPSInt_{\widehat{F}\prn{n}}$};
    \path [->] (SW) edge node [below] {$\widehat{F}_!\prn{\mu}\prn{\pi_2}$} (SM);
    \path [->] (SM) edge node [below] {$\alpha_{\widehat{F}\prn{\mu}}$} (SE);
    \path [->] (SW) edge node [left] {$\widehat{F}_!\prn{\mu}\prn{\pi_1}$} (NW);
    \path [->] (NE) edge node [above] {$\pi_1$} (NW);
    \path [->] (NE) edge node [right] {$\pi_2$} (SE);
    \path [dashed,->] (SW) edge (NE);
  \end{tikzpicture}\]
  This follows from $\alpha_{\widehat{F}\prn{\mu}}$ being invertible and $\widehat{F}_!\prn{\mu}$
  preserving finite limits.

  Replaying these arguments for the face lattices completes the construction of the exchange
  principles. In particular, while we have not required it, $\beta_{\widehat{F}\prn{\mu}}$ is always
  homomorphism of distributive lattices, essentially because $\widehat{F}_!\prn{\mu}$ preserves
  monomorphisms and commutes with finite limits and colimits. To construct the interpretations of
  $\ExcSbR$ and $\ExcSbRinv$, we observe that it is a (necessarily unique) map witnessing the
  equivalence of a pair of subobjects over the interpretation of $\LockCxV{\Gamma}$.

  Lastly, we will construct the compositions structures on modal types. For this, we note that the
  model of extensional \MTT{} obtained from Theorem~\ref{thm:MTT_cosmos} supports an inverse
  operation to $\MkBox{-}$ such that every element of a modal type is of the form $\MkBox{a}$.
  Therefore, the equation \ruleref{term-eq/comp-mod} can be taken as-is to fully define a
  composition structure.
\end{proof}

\subsection{Cubical presheaves}

The intended model of cubical type theory is a variant on the standard presheaf mode with types
interpreted as a variant of Kan cubical sets~\cite{cohen:2018}---particular presheaves on the cube
category $\CUBE$ realized as the Lawvere theory of De Morgan algebras. One immediate benefit of the
internal construction of a model of \CTT{} is to generalize this result from cubical sets to
\emph{presheaves valued in cubical sets}~\cite{orton:2018}.  Meanwhile, networks of presheaf
categories connected by the essential geometric morphisms induced by functors between base
categories are known to induce models of \MTT{}~\cite[Section 8]{gratzer:journal:2021}. In fact, a
consequence of Theorem~\ref{thm:cubical-MTT-cosmos} is that these two results can be essentially
combined, thereby giving rise to the most important models of \CubicalMTT{}.

\begin{prop}\label{prop:presheaf-LOPS}
  Let $\CC$ and $\DD$ be small categories, let $\Mor[F]{\CC}{\DD}$ be a functor, and write
  $F ^ \ast$, $F _ !$, and $F _ \ast$ for precomposition respectively left and right Kan extensions
  of $F \times \ArrId {\CUBE}$.
  \begin{enumerate}
  \item The presheaf categories $\PSH {\CC \times \CUBE}$ and $\PSH {\DD \times \CUBE}$ are LOPS
    topoi.
  \item The adjunction $\Pre{F} \Adjoint F_*$ induces a morphism of LOPS topoi.
  \item If $F_!$ is lex the adjunction $F _ ! \Adjoint \Pre{F}$ induces a morphism of LOPS topoi.
  \end{enumerate}
\end{prop}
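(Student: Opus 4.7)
I would bootstrap the LOPS structure on $\PSH{\CC \times \CUBE}$ and $\PSH{\DD \times \CUBE}$ from the canonical LOPS structure on $\PSH{\CUBE}$ via the projection $\pi : \CC \times \CUBE \to \CUBE$. Concretely, I set $\LOPSInt_{\PSH{\CC \times \CUBE}} := \pi^* \LOPSInt_{\PSH{\CUBE}}$ and $\LOPSFace_{\PSH{\CC \times \CUBE}} := \pi^* \LOPSFace_{\PSH{\CUBE}}$. Since $\pi^*$ has both adjoints and so preserves all limits and colimits, the De~Morgan algebra, bounded distributive lattice, and classifying monomorphism structures transport immediately, and a hierarchy of universes is obtained by the standard Hofmann--Streicher construction. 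The nontrivial Orton--Pitts axiom is tininess of $\LOPSInt_{\PSH{\CC \times \CUBE}}$: under the equivalence $\PSH{\CC \times \CUBE} \simeq [\CC^{\mathrm{op}}, \PSH{\CUBE}]$ the pulled-back interval corresponds to the constant presheaf at $\LOPSInt_{\PSH{\CUBE}}$, and I would argue that pointwise right adjoints to exponentiation by $\LOPSInt_{\PSH{\CUBE}}$ assemble into a global right adjoint, as in the Orton--Pitts treatment of presheaves over $\CUBE$.

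For (2), the equality of projections $\pi_\DD \circ (F \times \Id_\CUBE) = \pi_\CC$ immediately gives $\Pre{F} \circ \pi_\DD^* = \pi_\CC^*$ on the nose, so $\Pre{F}$ carries the interval and face lattice of $\PSH{\DD \times \CUBE}$ identically to those of $\PSH{\CC \times \CUBE}$; the isomorphisms $\alpha_{\Pre{F}}$ and $\beta_{\Pre{F}}$ may be taken to be identities and the compatibility condition is automatic. Note that $\Pre{F}$ is lex, since it has a further left adjoint $F_!$, so it is the inverse-image component of a geometric morphism.

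For (3), the main obstacle is verifying $F_! \pi_\CC^* X \cong \pi_\DD^* X$ when $X$ is $\LOPSInt_{\PSH{\CUBE}}$ or $\LOPSFace_{\PSH{\CUBE}}$. Computing $F_!$ pointwise, $(F_! \pi_\CC^* X)(d, m)$ is a colimit over the comma category $((F \times \Id_\CUBE) \downarrow (d, m))$, which factors as $(F \downarrow d) \times (\CUBE \downarrow m)$ by the product structure of $\DD \times \CUBE$. The diagram is constant along the first factor, so after applying co-Yoneda on the second, the colimit reduces to $\pi_0(F \downarrow d) \cdot X(m)$. The lex hypothesis on $F_!$ implies $F_!(1) = 1$, which is precisely the assertion that each comma category $F \downarrow d$ is connected, yielding $(F_! \pi_\CC^* X)(d, m) \cong X(m) = (\pi_\DD^* X)(d, m)$. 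Naturality of this isomorphism in $X$ preserves the algebraic structures, and the compatibility between the interval and face isomorphisms is inherited from the corresponding compatibility in $\PSH{\CUBE}$.
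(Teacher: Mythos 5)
Your treatment of parts (2) and (3) is essentially the paper's own argument. For (2) you observe, as the paper does, that $\pi_{\DD}\circ(F\times\ArrId{\CUBE})=\pi_{\CC}$ makes $\Pre{F}$ carry the interval and face lattice across on the nose, so the structural isomorphisms are identities. For (3) you compute $F_!$ pointwise as a colimit over $\COMMA{F\times\ArrId{\CUBE}}{(d,m)}$, split it as a product of comma categories, collapse the cube factor using its terminal object (the paper's Lemmas on comma products and colimits over categories with a terminal object), and use lexness of $F_!$ to get connectedness of each comma category over $\DD$ --- the paper extracts connectedness from filteredness of the comma categories, you from $F_!(1)=1$, which is equivalent. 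Deducing preservation of the De Morgan/lattice structure and of the $(-=0)$ compatibility from naturality of the isomorphism in $X$ (together with lexness of both composites, so that the product comparisons are respected) is a slightly slicker packaging than the paper's element chase, but the content is the same.

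Part (1) is where your write-up has real soft spots. First, the cofibration classifier does \emph{not} ``transport immediately'' along $\pi^*$ merely because $\pi^*$ preserves limits and colimits: $\pi^*$ is the inverse image of a geometric morphism and does not preserve the subobject classifier, so $\pi^*$ applied to the monomorphism $\F\hookrightarrow\Omega_{\PSH{\CUBE}}$ lands in $\pi^*\Omega_{\PSH{\CUBE}}$, not in $\Omega_{\PSH{\CC\times\CUBE}}$. One must compose with the comparison map $\pi^*\Omega_{\PSH{\CUBE}}\to\Omega_{\PSH{\CC\times\CUBE}}$ (which happens to be monic for a projection) and then re-verify the Orton--Pitts axioms for cofibrations in the new topos; the paper does not do this by transport but simply cites the literature for the fact that these cubical presheaf topoi satisfy the axioms. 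Second, your tininess argument hinges on the unproved assertion that ``pointwise right adjoints assemble into a global right adjoint''; that is not a general principle. What makes it work here is that exponentiation by the pulled-back interval is itself computed pointwise, i.e.\ under $\PSH{\CC\times\CUBE}\simeq[\Op{\CC},\PSH{\CUBE}]$ it is postcomposition with $(-)^{\I}$, whose right adjoint then postcomposes as well. Establishing that pointwise formula is exactly the content of the paper's Yoneda computation $\Yo{c,I}\times\LOPSInt\cong\Yo{c,I+\brc{i}}$, so you should either prove it or argue via that representability directly; as written, the key step of (1) is asserted rather than proved.
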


Before proving the above proposition we recall some standard lemmas.

\begin{lem}\label{lem:lexLan}
  Let $\CC$ and $\DD$ be small categories and $\Mor[F]{\CC}{\DD}$ a functor. Left Kan extension of
  functors sending $\Mor[X]{\CC}{\SET}$ to $\Mor[F_!(X)]{\DD}{\SET}$ is lex iff $\COMMA{F}{d}$ is
  filtered for each $d:\DD$.
\end{lem}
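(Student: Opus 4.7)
The plan is to combine the pointwise formula for left Kan extensions with the classical characterization of left-exact cocontinuous functors out of presheaf categories: such a functor is lex iff its associated presheaf on the base is flat, equivalently iff the corresponding category of elements is filtered.

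Recall the pointwise formula: $F_!(X)(d)$ is computed as the colimit of the composite $\COMMA{F}{d} \to \CC \xrightarrow{X} \SET$, where the first map projects $(c, f : F(c) \to d)$ to $c$. Since limits in $\SET^\DD$ are computed pointwise, $F_!$ is lex iff for each $d : \DD$ the functor $E_d : \SET^\CC \to \SET$ sending $X \mapsto F_!(X)(d)$ is lex. As a composite of a left adjoint ($F_!$) and a cocontinuous functor (evaluation at $d$, which preserves both limits and colimits in $\SET^\DD$), $E_d$ is cocontinuous, so by the Yoneda lemma it is determined by its restriction to the representable copresheaves $h_c = \Hom_\CC(c, -)$. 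Since left Kan extensions preserve representables, $F_!(h_c) = \Hom_\DD(F(c), -)$, whence $E_d(h_c) = \Hom_\DD(F(c), d)$, defining a presheaf $P_d : \CC^{op} \to \SET$.

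I then invoke the classical theorem that a cocontinuous functor $\SET^\CC \to \SET$ is left exact iff its associated presheaf on $\CC$ is flat, equivalently iff its category of elements is filtered. The category of elements of $P_d$ has objects $(c, f : F(c) \to d)$ and morphisms $(c, f) \to (c', f')$ given by $g : c \to c'$ with $f' \circ F(g) = f$; this is exactly $\COMMA{F}{d}$. Combining the equivalences yields the desired biconditional.

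The main obstacle is invoking the forward direction of the flat functor theorem (a variant of Diaconescu's theorem); the backward direction is more direct, following from the fact that filtered colimits in $\SET$ commute with finite limits combined with the pointwise formula. For the concluding remark, a filtered category must be nonempty (it admits a cocone for the empty diagram) and connected (it admits a cocone for any two-object discrete diagram), so filteredness immediately implies connectedness.
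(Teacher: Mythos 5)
Your proposal is correct, but it reaches the conclusion by a different route than the paper. The paper's own proof is a direct, one-direction computation: it takes a finite diagram $X$, identifies $F_!\prn{\Lim{} X}\prn{d}$ with the colimit over $\COMMA{F}{d}$ of the pointwise limit, and interchanges the (by hypothesis filtered) colimit with the finite limit, i.e.\ it establishes ``filtered $\Rightarrow$ lex'' and leaves the converse implicit as a classical fact. You instead reduce the whole biconditional to the classical characterization of left-exact cocontinuous functors out of a presheaf category: evaluation at $d$ composed with $F_!$ is cocontinuous, is determined by its values $\Hom[\DD]{F(c)}{d}$ on representables, and is lex iff that associated presheaf is flat iff its category of elements---which you correctly identify with $\COMMA{F}{d}$---is filtered; your backward direction then coincides with the paper's interchange argument. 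What your approach buys is coverage of the ``only if'' direction, which the paper's displayed proof never actually establishes even though it is the direction used later (in the cubical presheaf proposition, lexness of $F_!$ is used to conclude connectedness of the comma categories); the cost is invoking the flat-functor/Diaconescu-style theorem as a black box, whereas the paper's argument for its direction is elementary and self-contained. Your closing remark that filteredness gives nonemptiness (cocone over the empty diagram) and hence connectedness is also correct and matches how the lemma is used.
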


The above result can be alternatively phrased as stating that $\Yo \circ F$ is flat if and only if
$\COMMA{F}{d}$ is filtered for each $d : \DD$. A proof of this standard fact is given by
Borceux~\cite[Proposition 6.1.2]{borceux:vol1:1994}.  We note that this implies in particular that
$\COMMA{F}{d}$ is connected.

\begin{lem}\label{lem:commaproduct}
  Let $\Mor[F]{\CC}{\DD}$ and $\Mor[F']{\CC'}{\DD'}$. Then $\COMMA{F\times F'}{(d,d')}$ is
  equivalent to $\COMMA{F}{d}\times \COMMA{F'}{d'}$ for each $d : \DD$ and $d' : \DD'$.
\end{lem}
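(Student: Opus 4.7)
The plan is to exhibit an explicit isomorphism of categories (not merely an equivalence), since the statement should hold on the nose: a comma category of a product is the product of comma categories. I would begin by unpacking what the objects and morphisms of $\COMMA{F \times F'}{(d,d')}$ look like. An object is a pair consisting of an object $(c,c') : \CC \times \CC'$ together with a morphism $(F \times F')(c,c') \to (d,d')$ in $\DD \times \DD'$; since morphisms in a product category are given componentwise, this is the same data as a quadruple $(c, c', f, f')$ with $f : F(c) \to d$ and $f' : F'(c') \to d'$. Similarly, a morphism between two such objects is a pair $(g, g')$ in $\CC \times \CC'$ making the two triangles in $\DD$ and $\DD'$ commute separately.

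From this analysis I would define a functor $\Phi : \COMMA{F \times F'}{(d,d')} \to \COMMA{F}{d} \times \COMMA{F'}{d'}$ sending an object $((c,c'),(f,f'))$ to $((c,f),(c',f'))$, and a morphism $(g,g')$ to the pair $(g,g')$ viewed in the product. Going the other direction, I would define $\Psi$ by bundling a pair of comma-category objects $((c,f),(c',f'))$ into the single object $((c,c'),(f,f'))$, and similarly for morphisms.

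The verification that $\Phi$ and $\Psi$ are mutually inverse functors is then essentially the universal property of products of categories applied twice: once to describe the data of a morphism in $\CC \times \CC'$ and once to describe the data of a morphism in $\DD \times \DD'$. Functoriality is immediate because composition in both the product category and the comma category is computed componentwise.

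I do not expect a main obstacle here; the lemma is a structural bookkeeping fact, and the only care required is to notice that the identification is a strict isomorphism, which is stronger than the equivalence claimed in the statement. If desired, naturality in $(d,d')$ could be recorded as well, but it is not needed for the later application to Lemma~\ref{lem:lexLan}, where the lemma will be combined with the observation that a product of filtered (hence connected) categories is filtered.
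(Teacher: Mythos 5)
Your proof is correct: the paper states this lemma without proof as a standard fact, and your componentwise unpacking of objects and morphisms, yielding mutually inverse functors and hence a strict isomorphism (stronger than the stated equivalence), is exactly the argument intended and is what the application in Lemma~\ref{lem:lexLan} relies on.
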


\begin{lem}\label{lem:colimTerm}
  Consider a diagram $\Mor[F]{A\times B}{\CC}$ where $B$ has a terminal object $b_1$. Then the
  colimit of $F(a,b)$ over $A\times B$ is isomorphic to the colimit of $F(a, b_1)$ over $A$,
  naturally in $A$.
\end{lem}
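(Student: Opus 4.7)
The plan is to identify $A$ with the image of the inclusion $\Mor[i]{A}{A \times B}$, $a \mapsto \prn{a, b_1}$, and invoke the classical cofinality theorem after establishing that $i$ is a \emph{final} (cofinal) functor. Cofinality then yields a canonical isomorphism $\Colim_{A \times B} F \cong \Colim_A \prn{F \circ i} = \Colim_A F\prn{-, b_1}$.

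To verify finality, I would fix an arbitrary object $\prn{a_0, b_0}$ of $A \times B$ and examine the comma category $\COMMA{\prn{a_0, b_0}}{i}$. Its objects consist of an object $a \in A$ together with a morphism $\prn{h, k} : \prn{a_0, b_0} \to \prn{a, b_1}$; since $b_1$ is terminal in $B$, the second component $k$ is the unique arrow $b_0 \to b_1$, and the projection onto the first coordinate identifies $\COMMA{\prn{a_0, b_0}}{i}$ with $\COMMA{a_0}{\ArrId{A}}$. The latter admits $\prn{a_0, \ArrId{a_0}}$ as an initial object and is therefore nonempty and connected, which is exactly the condition for $i$ to be final.

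The isomorphism produced by cofinality is constructed via the universal property of colimits uniformly in the diagram $F$, and hence assembles into a natural isomorphism of functors $\Hom{A \times B}{\CC} \to \CC$; I take this to be the intended content of ``naturally in $A$'', read as naturality in the external data (in particular in $F$, or in any parameter on which $F$ depends functorially). The only genuine mathematical content of the lemma is the finality check, and that follows immediately from the universal property of $b_1$; I do not anticipate any substantive obstacles beyond light bookkeeping.
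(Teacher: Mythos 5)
Your proof is correct. The paper states this lemma without proof (it is invoked as a standard fact in the proof of Proposition~\ref{prop:presheaf-LOPS}), and your argument is precisely the standard route: the functor $a \mapsto (a,b_1)$ is final because each comma category $\COMMA{(a_0,b_0)}{i}$ reduces, by terminality of $b_1$, to a coslice with an initial object, hence is nonempty and connected; and the resulting canonical isomorphism, being induced by the universal property of colimits, is natural in all the ambient data, which is exactly the naturality needed when the lemma is chained into the string of isomorphisms in Proposition~\ref{prop:presheaf-LOPS}.
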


We can now prove Proposition~\ref{prop:presheaf-LOPS}:
\begin{proof}
  Note first that $\PSH {\CC \times \CUBE} = \Hom{\Op{\prn {\CC \times \CUBE}}}{\SET} \cong
    \Hom{\Op{\CC}}{\CSET} = \PSH [\CSET] \CC$. Letting $\I , \F : \CSET$ be the interval
  respectively face lattice from~\cite[Section 8.1]{cohen:2018}, we define $\LOPSInt _ \CC \prn
    {c , I} = \I \prn I$ and $\LOPSFace _ \CC \prn {c , I} = \F \prn I$ for $c : \CC$ and
  $I : \CUBE$.

  For (1) these topoi satisfy the Orton-Pitts axioms as noted in~\cite{coquand:2021}. To see that
  the intervals defined above are tiny we proceed as follows: Using the Yoneda lemma along with the
  fact that $\I$ is naturally isomorphic to $\Hom [\CUBE] - {\brc i}$ shows that $\Yo {c , I}
    \times \LOPSInt \cong \Yo {c , I + \brc i}$, and we thus calculate:
  \begin{align*}
    X ^ \LOPSInt \prn {c , I}
    & \cong \Hom{\Yo {c , I}}{X ^ \LOPSInt}\\
    &\cong \Hom{\Yo {c , I} \times \LOPSInt}{X} \\
    & \cong \Hom{\Yo {c , I + \brc i}}{X}\\
    & \cong X \prn {c , I + \brc i} \\
    & \cong \Pre*{\prn{\ArrId {\CC} \times \prn {- + \brc i}}}{X}\prn {c , I}
  \end{align*}
  The above is natural in $X$, and thus exponentiation by $\LOPSInt$ is (naturally isomorphic to)
  the precomposition functor $\Pre{\prn{\ArrId {\CC} \times \prn {- + \brc i}}}$. As this functor
  has a right adjoint, we have shown that $\LOPSInt$ is tiny.

  We write $\pi_{\CC}, \pi_{\DD}$ for the projections $\Mor{\CC \times \CUBE}{\CUBE}$ and
  $\Mor {\DD \times\CUBE}{\CUBE}$ respectively. For the second (respectively third) requirement we
  show the following:
  \begin{itemize}
  \item $\Pre{F}$ (resp.\ $F_!$) preserves finite limits.
  \item $\iota : \Pre{F} \circ \Pre{\pi_\DD} \cong \Pre{\pi_\CC}$
    (resp.\ $F_! \circ \Pre{\pi_\CC} \cong \Pre{\pi_\DD}$)
  \item $\iota$ is an isomorphism of De Morgan algebras at $\I$
    (resp.\ distributive lattices at $\F$).
  \end{itemize}
  The remaining conditions of a morphism of LOPS topoi follow automatically from the naturality
  of $\alpha$ (resp. $\beta$).

  For the first item, we note that $\Pre{F}$ preserves all limits since it is a right adjoint, and
  that $F_!$ preserves finite limits by assumption.

  Next, the desired isomorphism $\Pre{F}(\Pre{\pi_\DD}(X)) \cong \Pre{\pi_\CC}(X)$ can be taken to
  be the identity, justified by the following computation:
  \[
    \Pre{F}(\Pre{\pi_\DD}(X))(c,I) = \Pre{\pi_\DD}(X)(F(c),I) = X(I) = \Pre{\pi_\CC}(X)(c,I)
  \]
  It is clear that this isomorphism preserves the De Morgan algebra structure when $X=\I$ and the
  distributive lattice structure when $X=\F$.

  It remains to consider these conditions for $F_!$. We construct
  $F_!(\Pre{\pi_\CC}(X)) \cong \Pre{\pi_\DD}(X)$ as the composite of a string of natural
  isomorphisms:
  \begin{align*}
    F_!(\Pre{\pi_\CC}(X))(d,I)
    &\cong \Colim_{\prn{\prn{c, I'}, \_} : \COMMA{F \times \ArrId{}}{(d,I)}}
    \Pre{\pi_\CC}(X)(c,I')\\
    &= \Colim_{\prn{\prn{c, I'}, \_} : \COMMA{F \times \ArrId{}}{(d,I)}} X(I')\\
    &\cong \Colim_{\prn{\prn{c,\_},\prn{I',\_}} : \COMMA{F}{d} \times \COMMA{\ArrId{}}{I}} X(I')
    && \text{Lemma~\ref{lem:commaproduct}}\\
    &\cong \Colim_{\prn{c, \_} : \COMMA{F}{d}} X(I)
    && \text{Lemma~\ref{lem:colimTerm}}\\
    &\cong X(I)\\
    &= \Pre{\pi_\DD}(X)(d,I)
  \end{align*}
  In the above calculation, the fourth isomorphism follows by observing that
  $\Colim_{\prn{c, \_} : \COMMA{F}{d}} X(I)$ is the colimit of
  a constant diagram over $\COMMA{F}{d}$; because
  $F_!$ is lex, Lemma~\ref{lem:lexLan} ensures that $\COMMA{\ArrId{\CUBE}\times F}{(d,I)}$ is
  connected, and hence so is $\COMMA{F}{d}$. The isomorphism then follows from the observation that
  colimits of constant, connected diagrams are isomorphic to the value of the diagram.

  We must argue that this is an isomorphism of De Morgan algebras when $X = \I$ and of distributive
  lattices when $X = \F$. Chasing an element through this string of isomorphisms, we send an element
  of the colimit $\In{\prn{\prn{c, I'}, \prn{f,g}}}\prn{x}$ to $X\prn{g}(x)$. One can verify that
  this preserves the relevant structure when $X$ is appropriately specialized. We illustrate the
  simple case of interval endpoints: The $0$ endpoint of $F_!(\LOPSInt_\CC)$ at $(d,I)$ is given by
  $\In{(f_0, \ArrId{})}\prn{0 : \I(I)}$ where $f_0$ is an arbitrary object of the (necessarily
  non-empty) category $\COMMA{F}{d}$. It is clear that this pair is mapped to
  $0 : \LOPSInt_\DD(d,I)$ via the morphisms above.
\end{proof}

We can package all of the above results into the following:

\begin{thm}\label{thm:cubical-presheaf-model}
  Let $\Mor[F]{\Mode}{\CAT}$ be a strict 2-functor, write $\Pre{F}\prn{\mu}$, $\LKan{F}\prn{\mu}$,
  and $\RKan{F}\prn{\mu}$ for the precomposition, left Kan extension, and right Kan extension
  respectively of $F\prn{\mu}\times\ArrId{\CUBE}$, and write $\Pre{F}$, $\LKan{F}$, and $\RKan{F}$
  for the induced pseudofunctors.
  \begin{itemize}
    \item The network of morphisms of LOPS topoi given by the adjunctions
    $\Pre{F}\prn{\mu}\dashv\RKan{F}\prn{\mu}$ induces a model of \CubicalMTT{} over $\Mode$ with
    modal context structure equal to $\Pre{F}$.
    \item The network of morphisms of LOPS topoi given by the adjunctions
    $\LKan{F}\prn{\mu}\dashv\Pre{F}\prn{\mu}$ induces a model of \CubicalMTT{} over $\Coop{\Mode}$
    with modal context structure pseudonaturally equivalent to $\LKan{F}$ if each
    $\LKan{F}\prn{\mu}$ is lex.
  \end{itemize}
\end{thm}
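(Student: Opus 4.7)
The plan is to apply Theorem~\ref{thm:cubical-MTT-cosmos} after exhibiting the given data as a cubical \MTT{} cosmos. For the first bullet I would define a pseudofunctor $G : \Mode \to \CAT$ by $G\prn{m} = \PSH{f\prn{m}\times\CUBE}$ and $G\prn{\mu} = \RKan{F}\prn{\mu}$, so that each $G\prn{\mu}$ is a right adjoint to $\Pre{F}\prn{\mu}$; Lemma~\ref{lemma:adjoint_pseudofunctor} then assembles the left adjoints into a pseudofunctor $G_! : \Coop{\Mode} \to \CAT$, which, since precomposition is strictly contravariantly functorial, is in fact the strict 2-functor $\Pre{F}$. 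For the second bullet I would dually define $G : \Coop{\Mode} \to \CAT$ by $G\prn{\mu} = \Pre{F}\prn{\mu}$ with left adjoint $\LKan{F}\prn{\mu}$; the pseudofunctor of left adjoints is then $\LKan{F} : \Mode \to \CAT$.

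Next I would check that $G$ is a cubical \MTT{} cosmos. Each $\PSH{f\prn{m}\times\CUBE}$ is locally Cartesian closed and carries a Hofmann--Streicher universe closed under dependent sums, products, booleans, and extensional identity types, providing the extensional \MTT{} cosmos structure. The three cubical conditions then follow from Proposition~\ref{prop:presheaf-LOPS}: clause~(1) gives that each mode is a LOPS topos, while clauses~(2) and~(3) give that each transition functor is a morphism of LOPS topoi---where the hypothesis that every $\LKan{F}\prn{\mu}$ is lex is invoked only for the second bullet. Pseudonaturality of the interval and face lattice isomorphisms is inherited from the naturality in $f\prn{\mu}\times\ArrId{\CUBE}$ of the isomorphisms constructed in the proof of Proposition~\ref{prop:presheaf-LOPS}.

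The technically most delicate step is verifying precisely this last item: that the isomorphisms $\alpha$ and $\beta$ cohere pseudonaturally across composition of modalities and 2-cells. In the first case these isomorphisms are identities, so the coherences are immediate. In the second case the isomorphism $\LKan{F}\prn{\mu}\prn{\Pre{\pi_\CC}X} \cong \Pre{\pi_\DD}X$ is determined up to unique isomorphism by a universal property of colimits, and the coherences follow from the uniqueness of mediating maps combined with the pseudofunctoriality intrinsic to Kan extensions. Applying Theorem~\ref{thm:cubical-MTT-cosmos} to $G$ then produces a model of \CubicalMTT{} with modal context structure equal to $\Pre{F}$ in the first case (where $G_!$ is already a strict 2-functor) and pseudonaturally equivalent to $\LKan{F}$ in the second, as required.
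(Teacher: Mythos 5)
Your proposal is correct and takes essentially the same route as the paper: reduce via Theorem~\ref{thm:cubical-MTT-cosmos} to showing that $\RKan{F}$ (respectively $\Pre{F}$) is a cubical \MTT{} cosmos, invoke Proposition~\ref{prop:presheaf-LOPS} for the LOPS-topos and morphism-of-LOPS-topoi conditions, and check pseudonaturality of the interval and face lattice isomorphisms, which is trivial in the precomposition-left-adjoint case (identities, strict 2-functoriality) and handled in the left-Kan-extension case by comparing the coherence isomorphisms on colimit components. The paper's proof verifies this last comparison explicitly for the identity coherence (showing the pseudofunctor's structural isomorphism, realized as a counit, agrees componentwise with the isomorphism of Proposition~\ref{prop:presheaf-LOPS}), which is exactly the ``uniqueness of mediating maps'' argument you sketch.
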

\begin{proof}
  By Theorem~\ref{thm:cubical-MTT-cosmos}, it is sufficient to show that $\RKan{F}$ (respectively
  $\Pre{F}$) is a cubical $\MTT{}$ cosmos. By Proposition~\ref{prop:presheaf-LOPS}, each
  $\PSH{F\prn{m}\times\CUBE}$ is a LOPS topos, and each adjunction $\Pre{F}\prn{\mu} \dashv
  \RKan{F}\prn{\mu}$ (respectively $\LKan{F}\prn{\mu} \dashv \Pre{F}\prn{\mu}$) is a morphism of
  LOPS topoi, and we thus need only verify that the interval and face lattice isomorphisms cohere
  pseudonaturally.

  Consider first the case of the adjunctions $\Pre{F}\prn{\mu} \dashv \RKan{F}\prn{\mu}$. In this
  case, each interval (respectively face lattice) isomorphism is the identity, and thus since the
  left adjoints form a strict 2-functor, the interval objects (respectively face lattice objects)
  form a strict 2-natural transformation, which in particular is also a pseudonatural
  transformation.

  Consider next the case of the adjunctions $\LKan{F}\prn{\mu} \dashv \Pre{F}\prn{\mu}$. To prove
  pseudonaturality, we need to prove coherence with identity, composition, and 2-cells. The proofs
  are similar, and we illustrate them with the identity case. Since $\LKan{F}\prn{\mu}$ is a
  pseudofunctor, there is a natural isomorphism $\LKan{F}\prn{1_m} \cong
  \ArrId{\PSH{F\prn{m}\times\CUBE}}$. We must prove that at the interval (respectively face
  lattice) object, this is the same isomorphism as the one constructed in
  Proposition~\ref{prop:presheaf-LOPS}.

  The isomorphism in the proof of Proposition~\ref{prop:presheaf-LOPS},
  $\LKan{F}\prn{\mu}\prn{\Pre{\pi_{\CC_m}}\prn{X}}\prn{d,I} \cong
  \Pre{\pi_{F\prn{m}}}\prn{X}\prn{d,I}$, may be specified as follows: It is uniquely defined by its
  value upon precomposition with the components of the colimit $\Colim_{\prn{\prn{c, I'}, \_} :
  \COMMA{F\prn{\mu} \times \ArrId{}}{(d,I)}}\Pre{\pi_{F\prn{m}}}\prn{X}\prn{c,I'}$, and this value
  is for each $c : F\prn{m}$, $I' : \CUBE$, and
  $\Mor[\prn{g,\iota}]{\prn{d,I}}{\prn{F\prn{\mu}\prn{c},I'}}$ equal to
  \[
    \Mor[X\prn{\iota}]{\Pre{\pi_{F\prn{m}}}\prn{X}\prn{c,I'} = X\prn{I'}}{X\prn{I} =
    \Pre{\pi_{F\prn{n}}}\prn{X}\prn{d,I}}.
  \]

  The isomorphism $\LKan{F}\prn{1_m} \cong \ArrId{}$ is constructed from the fact that both
  $\LKan{F}\prn{1_m}$ and $\ArrId{}$ are left adjoints to
  $\Pre{F}\prn{1_m} = \ArrId{\PSH{F\prn{m}\times\CUBE}}$. Concretely, it is the counit:
  \[
    \Mor[\varepsilon_{1_m}]{\LKan{F}\prn{1_m}=\LKan{F}\prn{1_m}\circ\Pre{F}\prn{1_m}}{\ArrId{\PSH{F\prn{m}\times\CUBE}}}
  \]
  Precomposing with the components of the colimit
  $\Colim_{\prn{\prn{c, I'}, \_} : \COMMA{F\prn{\mu} \times
      \ArrId{}}{(d,I)}}\Pre{\pi_{F\prn{m}}}\prn{X}\prn{c,I'}$, we get $X\prn{\iota}$ as before,
  which shows that the two isomorphisms are the same, and thus the identity condition for
  pseudonaturality is satisfied.
\end{proof}

\begin{rem}
  In these results, we have chosen the cofibration classifier in each mode to be
  $\LOPSFace _ \CC \prn {c , I} = \F \prn I$. While certainly valid, it is also reasonable to define
  $\LOPSFace$ to simply be the subobject classifier $\Omega$. We leave it to the reader to show that
  with this alternative choice of face lattice, Theorem~\ref{thm:cubical-presheaf-model} is still
  valid.
\end{rem}

\section{Proving and programming with guarded recursion}
\label{sec:examples}

We now turn from theory to practice\footnote{Or at least, slightly more practice-adjacent theory!}
and consider \emph{guarded \CubicalMTT{}}. We briefly recall guarded recursion.  The core idea of
guarded recursion~\cite{nakano:2000} is to use a modality $\Later$ (pronounced `later') to isolate
recursively produced data to prevent its use until work is done, thereby ensuring productivity. This
modality is equipped with operations making it into an applicative functor~\cite{mcbride:2008} which
satisfies \emph{L{\"o}b induction}, a powerful guarded fixed-point principle:
\begin{mathpar}
  \Mor[\Next]{A}{\Later A}
  \and
  \Mor[\prn{\ZApp}]{\Later {\prn {A \to B}}}{\prn{\prn{\Later A} \to \prn{\Later B}}}
  \and
  \Mor[\Lob]{\prn{{\Later A} \to A}}{A}
\end{mathpar}

In particular, $\Lob$ allows us to define an element of $A$ recursively but because the recursively
computed data is available only as $\Later A$, the usual problems with fixed-points are avoided.
We consider a variant of guarded recursion which further includes an idempotent comonad
$\Box$\footnote{This is not related to cubes despite what the notation might suggest.} along
with an equivalence $\Box \Later A \Equiv \Box A$.
This last property ensures that guarded type theory
can construct \emph{coinductive} types through L{\"o}b induction~\cite{clouston:2015}.

To encode guarded recursion in \CubicalMTT{}, we instantiate the theory with a particular mode
theory and extend it with a pair of
axioms. As a result, we obtain a highly workable guarded type theory supporting the relevant
modalities and operations. Similar work was done for \emph{extensional} \MTT{} in \cite[Section
9]{gratzer:journal:2021}; here we show that the improved notion of equality in \CubicalMTT{} results
in an improved experience.

Concretely, we work in the mode theory $\Mode _ g$, a poset-enriched category which is concisely
defined by \autoref{fig:guarded-mode-theory}. Using the substitutions induced by 2-cells, we define:
\begin{mathpar}
  \Later A = \Modify[\ell]{\Sb{A}{\Key{1\leq\ell}{}}}
  \and
  \Next[x] = \MkBox[\ell]{\Sb{A}{\Key{1\leq\ell}{}}}
\end{mathpar}
$\ZApp$ is likewise definable, but $\Lob$ cannot be defined in \CubicalMTT{} and must
be axiomitized (\autoref{fig:lob-induction}).
In order to justify its inclusion, we provide a model of \CubicalMTT{}
over $\Mode _ g$ with L{\"o}b induction.
\begin{figure}
  \[
    \begin{tikzpicture}[diagram]
      \node (T) {$t$};
      \node (S) [right = of T] {$s$};
      \path[->] (T) edge[looseness = 15, out = 210, in = 150] node[left] {$\ell$} (T);
      \path[->] (S) edge[bend left] node[below] {$\delta$} (T);
      \path[->] (T) edge[bend left] node[above] {$\gamma$} (S);
      \node (E) [right = 2.5cm of S]
      {$\begin{aligned}
          \delta \circ \gamma &\le 1 & 1 &= \gamma \circ \delta\\
          1 &\le \ell & \gamma &= \gamma \circ \ell
        \end{aligned}$};
    \end{tikzpicture}
  \]
  \caption{$\Mode_g$: a mode theory for guarded recursion. Reproduced from fig.~11 of
    \cite{gratzer:journal:2021}}
  \label{fig:guarded-mode-theory}
\end{figure}

\begin{figure}
  \begin{mathparpagebreakable}
    \inferrule{
      \IsCx \Gamma <t> \\
      \IsTy A <t>
    }{
      \IsTm \Lob {\prn {\Later A \to A} \to A} <t>
    }
    \and
    \inferrule{
      \IsCx \Gamma <t> \\
      \IsTy A <t> \\
      \IsTm M {\Later A \to A} <t>
    }{
      \EqTm {\App \Lob M} {\App M {\Next [\App \Lob M]}} A <t>
    }
  \end{mathparpagebreakable}
  \caption{The rules of L{\"o}b induction.}
  \label{fig:lob-induction}
\end{figure}

\subsection{Soundness of L{\"o}b induction in \texorpdfstring{\CubicalMTT}{Cubical MTT}}

Letting $\omega$ be the poset category for the first infinite ordinal and $1$ the terminal
category, we define the strict 2-functor $\Mor [f] {\Mode _ g} \CAT$ by
\begin{mathpar}
  f\prn{t} = \omega
  \and
  f\prn{s} = \mathbf 1
  \and
  f\prn{\delta}\prn{\ast} = 0
  \and
  f\prn{\ell}\prn{n} = n+1
  \and
  f\prn{\gamma}\prn{n} = \ast
\end{mathpar}
From this, we define the pseudofunctor $\Mor[F]{\Mode_g}{\CAT}$ by
$F\prn{m} = \PSH{f\prn{m}\times\CUBE}$ and $F\prn{\mu} = \prn{f\prn{\mu}\times\ArrId{\CUBE}}_\ast$,
which by Theorem~\ref{thm:cubical-presheaf-model} induces a model of \CubicalMTT{} $\widehat{F}$.
This
model is almost the same as the model defined in \cite[Section 9.2]{gratzer:journal:2021}, but
$\widehat{F}$ uses $\CSET$-valued presheaves. Since the cubical and modal aspects of \CubicalMTT{}
are orthogonal, considerations in the $\SET$-based model that do not involve identity types carry
over to $\widehat{F}$. In particular, because L{\"o}b induction holds in the $\SET$-based model,
it also holds in $\widehat{F}$.\footnote{This can also be verified by hand as is done in
  \cite{birkedal:2019}.}

\subsection{Programming with guarded \texorpdfstring{\CubicalMTT}{Cubical MTT}}

To see that \CubicalMTT{} can not merely replicate but also improve on work done in \MTT{}, we
now show that L{\"o}b induction not only gives a fixpoint but a unique one. In \cite[Theorem
9.5]{gratzer:2020} this is proven for extensional \MTT{} (by introducing equality reflection), but
because of
modal extensionality, we can now prove it with nothing but \CubicalMTT{} and L{\"o}b induction.
Similarly, the results from \cite[Section 9.4]{gratzer:journal:2021} about guarded and coinductive
streams in guarded \MTT{} may also be proven in guarded \CubicalMTT{} without equality reflection.

\begin{thm}\label{thm:lob-unique-fixpoint}
$\App \Lob M$ is the unique guarded fixpoint of $\Mor [M] {\Later {\Dec A}} {\Dec A}$, i.e.
\[
  \prn {A : \Uni}\prn{x : {\Dec A}}
  \to \Path{\Dec A}{\App M {\Next[x]}}{x}
  \to \Path {\Dec A}{\Lob[M]}{x}
\]
\end{thm}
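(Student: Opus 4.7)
The plan is to prove the goal by L{\"o}b induction on the statement itself, using modal extensionality to commute $\Later$ past the path type.

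Concretely, fix $A$, $x$, and a hypothesis $h : \Path{\Dec A}{\App M {\Next[x]}}{x}$, and consider the type $P = \Path{\Dec A}{\Lob[M]}{x}$ at mode $t$. To produce a term of $P$ it suffices, by L{\"o}b induction, to construct a map $\Later P \to P$. So suppose we are given $q : \Later P = \Later \Path{\Dec A}{\Lob[M]}{x}$. By Theorem~\ref{thm:cmtt:modal-ext} instantiated at $\ell$, the modal extensionality equivalence yields a path
\[
  \ModExt{\Lob[M]}{x}\prn{q} : \Path{\Later \Dec A}{\Next[\Lob[M]]}{\Next[x]}.
\]
Applying the action of $M$ on paths produces a path
\[
  \Path{\Dec A}{\App M{\Next[\Lob[M]]}}{\App M{\Next[x]}},
\]
and the $\Lob$-computation rule (the second rule in \autoref{fig:lob-induction}) converts the left endpoint into $\Lob[M]$ judgmentally. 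Concatenating this path with the hypothesis $h$ then produces the desired element of $P$.

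Feeding this map $\Later P \to P$ into $\Lob$ yields a canonical inhabitant of $P$, completing the proof. The only mildly delicate step is the middle one: we need to observe that modal extensionality is precisely the principle that lets us trade a later-modal path for a path of later-modal terms, and that this is the ingredient missing from plain \MTT{} that forced earlier work to use equality reflection. Everything else is bookkeeping: the application of $M$ to a path uses only standard \CTT{} machinery, and the computation rule for $\Lob$ is definitional.

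The main obstacle, or at least the step most worth double-checking, is not the mathematics but the mode discipline around the L{\"o}b induction: $\Lob$ is introduced at mode $t$ and must be applied to a type living entirely in mode $t$, so one must verify that $P$ is mode-local at $t$ and that the constructed map $\Later P \to P$ is well-typed in the unlocked context. Given that $\Dec A$, $\Lob[M]$, $x$, and $h$ all already live at $t$, and that $\Later$ and $\Next$ are defined so as to keep one at mode $t$ while passing through the lock, this works out; modal extensionality then bridges the gap between $\Later$ and the path former entirely inside mode $t$.
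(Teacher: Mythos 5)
Your proposal is correct and follows essentially the same route as the paper's proof: L\"ob induction, then Theorem~\ref{thm:cmtt:modal-ext} to turn the later-modal path into a path between next-terms, then applying $M$ to that path, the fixpoint computation rule, and concatenation with the hypothesis. The only (harmless) difference is that you fix $x$ and $h$ and perform L\"ob induction on the path type itself, whereas the paper runs the induction on the full statement quantified over $x$ and the hypothesis and then instantiates via $\ZApp$ at $\Next[x]$ and the next of the hypothesis; since the induction hypothesis is only ever used at that same $x$ and $h$, your specialization loses nothing.
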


\begin{proof}
  Supposing $A : \Uni$, we intend to use L{\"o}b induction to find a term of
  \[
    \prn{x : {\Dec A}} \to \Path{\Dec A}{\App M {\Next[x]}}{x} \to \Path {\Dec A}{\Lob[M]}{x}
  \]
  To this end, given $f : \Later \prn{\prn{x : {\Dec A}} \to \Path{\Dec A}{\App M
  {\Next[x]}}{x} \to \Path {\Dec A}{\Lob[M]}{x}}$,
  $x : \Dec A$, and $p : \Path {\Dec A} {\App M {\Next [x]}} x$, we must define a term of
  $\Path {\Dec A} {\App \Lob M} x$. We can construct the term
  \[
    f \ZApp \Next [x] \ZApp \Next [p] : \Later {\Path {\Dec A} {\Lob[M]} x}.
  \]
  By Theorem~\ref{thm:cmtt:modal-ext}, this gives a term of
  $\Path {\Later {\Dec A}} {\Next [\Lob[M]]} {\Next [x]}$. Using that function application
  preserves paths
  and that $\App \Lob M$ is a guarded fixpoint we then obtain the paths
  \[
    \App \Lob M = \App M {\Next [\Lob M]}  \PathEq \App M {\Next [x]}  \PathEq x. \qedhere
  \]
\end{proof}

\section{Related work}
\label{sec:related-work}

\CubicalMTT{} builds upon two distinct strands of work: cubical and modal type theories. Even though
both lines of research are ongoing, several proposals have already been made which combine
elements of both.

\paragraph*{Modal homotopy type theory}
Several version of homotopy type theory extended with modalities have been
proposed~\cite{shulman:2018,riley:2021}. These type theories aim to increase the expressivity of
HoTT and allow it to better capture some aspects of homotopy theory. Unlike \CubicalMTT{}, however,
these theories tend to be specialized to various modal situations. They build in the structure of
one specific modality and are hand-crafted to have manageable syntax for that situation. In
contrast, \CubicalMTT{} follows \MTT{} and works for a \emph{class} of modalities, and provides
usable syntax in each case. Moreover, prior type theories in this tradition expand ``book
HoTT''~\cite{hottbook} and therefore do not enjoy the good computational properties we conjecture
for
\CubicalMTT{}.

\paragraph*{Modal cubical type theory}
In order to extend cubical type theory with an \emph{internal} notion of parametricity,
Cavallo~\cite[Part IV]{cavallo:phd} has proposed a variant of (cartesian) cubical type theory
extended with connectives and a handful of modalities to capture parametricity. Like \CubicalMTT{},
this \emph{cohesive parametric cubical type theory} combines cubical type theory with Fitch-style
modalities. While morally the system is a specialization of \CubicalMTT{} to a cohesive collection
of modalities, Cavallo takes advantage of several specifics of the intended model to add various
equations to the theory.

Separately, another Fitch-style type theory, \emph{clocked type theory}, has been extended to a
cubical basis~\cite{kristensen:2021}. This theory is used to present guarded recursion, similarly to
\autoref{sec:examples}. Unlike guarded \CubicalMTT{}, clocked cubical type theory includes several
specialized axioms, a more sophisticated collection of guarded modalities, and an account of the
interaction of HITs with parts of the modal machinery.

The extra equations and properties of the modalities in both systems prevent \CubicalMTT{} from
directly recovering either parametric cubical type theory or clocked cubical type theory. The core
aspects of both, however, are similar to \CubicalMTT{} and we believe that \CubicalMTT{} gives a
means of systematically generalizing these calculi to other modal situations.

\section{Conclusions}
\label{sec:conclusions}

We contribute \CubicalMTT{}, a general modal type theory based on cubical type theory and
\MTT{}. The system can be instantiated to a number of modal situations while still maintaining
computationally effective interpretations of univalence and function extensionality.

While in this work we have introduced the theory and characterized a class of models for it, in the
future we hope to investigate further metatheoretic properties of the system. In particular, both
\MTT{} and cubical type theory enjoy normalization~\cite{gratzer:normalization:2022,sterling:2021}, and we
conjecture that these proofs can be combined and generalized to apply to \CubicalMTT{}. The
introduction of cubical cosmoi takes the first step in this direction: cosmoi are a crucial
ingredient of the proof of normalization for \MTT{}. In a separate direction, we hope to investigate
the behavior of more of the mode-local structure of cubical type theory such as \emph{higher
  inductive types} and other novelties of cubical type theories.

\bibliographystyle{alphaurl}
\bibliography{refs.bib,temp-refs.bib}

\appendix
\section{Rules of \texorpdfstring{\CubicalMTT{}}{Cubical MTT}}
\label{app:rules}

We here present the official syntax and rules of \CubicalMTT{}. For the sake of brevity, we omit a
number the rules, especially those lifted from \MTT{} or \CTT{}; in particular, we omit the
following:
\begin{itemize}
\item The rules for dependent sums, booleans, universes.
\item The equations stating that the interval is a De Morgan algebra.
\item The equations stating that $\ExcIntInv{\prn{-}}$ for interval terms is a morphism of De Morgan
  algebras.
\item The equations stating that the face lattice is a bounded distributive lattice,
\item The equations stating that $\ExcCofInv{\prn{-}}$ for faces is a morphism of bounded lattices,
\item Miscellaneous equations commuting substitutions past term formers or governing the composition
  of substitutions.
\end{itemize}
Rules that govern the interaction between cubical and modal aspects are marked with a ${ }^\dagger$.
At the end there is a section on derived definitions some of which we will use throughout to ease
notation.

\paragraph*{Context formation.}\hfill

\begin{mathparpagebreakable}
  \inferrule [cx/emp]
  {\phantom{.}}
  {\IsCx \EmpCx} \and
  \inferrule [cx/lock]
  {\Mor [\mu] n m \\
    \IsCx \Gamma}
  {\IsCx {\LockCx \Gamma} <n>} \and
  \inferrule [cx/ext-type]
  {\Mor [\mu] n m \\
    \IsCx \Gamma \\
    \IsTy [\LockCx \Gamma] A <n>}
  {\IsCx {\ECx \Gamma A}} \and
  \inferrule [cx/ext-int]
  {\IsCx \Gamma}
  {\IsCx {\ICx \Gamma}} \and
  \inferrule [cx/face-res]
  {\IsCx \Gamma \\
    \IsTm \phi {\F _ m}}
  {\IsCx {\RCx \Gamma}} \and
\end{mathparpagebreakable}

\paragraph*{Context equality.}\hfill

\begin{mathparpagebreakable}
  \inferrule [cx-eq/comp-lock]
  {\Mor [\mu] n m \\
    \Mor [\nu] o n \\
    \IsCx \Gamma}
  {\EqCx {\LockCx \Gamma <\mu \circ \nu>} {\LockCx {\LockCx \Gamma} <\nu>} <o>}
  \and
  \inferrule [cx-eq/id-lock]
  {\IsCx \Gamma}
  {\EqCx {\LockCx \Gamma <1>} \Gamma} \and
\end{mathparpagebreakable}

\paragraph*{Substitution formation.}\hfill

\begin{mathparpagebreakable}
  \inferrule [sb/comp]
  {\IsCx {\Gamma , \Delta , \Xi} \\
    \IsSb \delta \Delta \\
    \IsSb [\Delta] \xi \Xi}
  {\IsSb {\xi \circ \delta} \Xi} \and
  \inferrule [sb/id]
  {\IsCx \Gamma}
  {\IsSb \ISb \Gamma} \and
  \inferrule [sb/emp]
  {\IsCx \Gamma}
  {\IsSb \EmpSb \EmpCx} \and
  \inferrule [sb/weak-type]
  {\Mor [\mu] n m \\
    \IsCx \Gamma \\
    \IsTy [\LockCx \Gamma] A <n>}
  {\IsSb [\ECx \Gamma A] \Wk \Gamma} \and
  \inferrule [sb/weak-int]
  {\IsCx \Gamma}
  {\IsSb [\ICx \Gamma] \WkI \Gamma} \and
  \inferrule [sb/weak-res]
  {\IsCx \Gamma \\
    \IsTm \phi {\F _ m}}
  {\IsSb [\RCx \Gamma] \WkF \Gamma} \and
  \inferrule [sb/lock]
  {\Mor [\mu] n m \\
    \IsCx {\Gamma , \Delta} \\
    \IsSb \delta \Delta}
  {\IsSb [\LockCx \Gamma] {\LockSb \delta} {\LockCx \Delta} <n>} \and
  \inferrule [sb/key]
  {\mu , \Mor [\nu] n m \\
    \Mor [\alpha] \nu \mu \\
    \IsCx \Gamma}
  {\IsSb [\LockCx \Gamma] {\Key \alpha \Gamma} {\LockCx \Gamma <\nu>} <n>} \and
  \inferrule [sb/ext-type]
  {\Mor [\mu] n m \\
    \IsCx {\Gamma , \Delta} \\
    \IsSb \delta \Delta \\
    \IsTy [\LockCx \Delta] A <n> \\
    \IsTm [\LockCx \Gamma] a {\Sb A {\LockSb \delta}} <n>}
  {\IsSb {\ESb \delta a} {\ECx \Delta A}} \and
  \inferrule [sb/ext-int]
  {\IsCx {\Gamma , \Delta} \\
    \IsSb \delta \Delta \\
    \IsTm r {\I _ m}}
  {\IsSb {\ESb \delta r} {\ICx \Delta}} \and
  \inferrule [sb/face-res]
  {\IsCx {\Gamma , \Delta} \\
    \IsSb \delta \Delta \\
    \IsTm [\Delta] \phi {\F _ m} \\
    \EqTm {\Sb \phi \delta} \Ftop {\F _ m}}
  {\IsSb {\RCx \delta} {\RCx \Delta}} \and
  \inferrule [sb/exc-int-inv${ }^\dagger$]
  {\Mor [\mu] n m \\
    \IsCx \Gamma}
  {\IsSb [\ICx {\LockCx \Gamma} <n>] \ExcSbIinv {\LockCx {\ICx \Gamma}} <n>} \and
  \inferrule [sb/exc-face-inv${ }^\dagger$]
  {\Mor [\mu] n m \\
    \IsCx \Gamma \\
    \IsTm [\Gamma] \phi {\F _ m}}
  {\IsSb [\RCx {\LockCx \Gamma} [\ExcCofInv \phi]] \ExcSbRinv {\LockCx {\RCx \Gamma}} <n>}
\end{mathparpagebreakable}

\paragraph*{Substitution equality.}\hfill

\begin{mathparpagebreakable}
  \inferrule [sb-eq/comp-lock]
  {\Mor [\mu] n m \\
    \Mor [\nu] o n \\
    \IsCx {\Gamma , \Delta} \\
    \IsSb \delta \Delta}
  {\EqSb [\LockCx \Gamma <\mu \circ \nu>] {\LockSb \delta <\mu \circ \nu>} {\LockSb {\LockSb \delta
  <\mu>} <\nu>} {\LockCx \Delta <\mu \circ \nu>} <o>} \and
  \inferrule [sb-eq/id-lock]
  {\IsCx {\Gamma , \Delta} \\
    \IsSb \delta \Delta}
  {\EqSb {\LockSb \delta <1>} \delta \Delta} \and
  \inferrule [sb-eq/lock-comp]
  {\Mor [\mu] n m \\
    \IsCx {\Gamma , \Delta , \Xi} \\
    \IsSb \delta \Delta \\
    \IsSb [\Delta] \xi \Xi}
  {\EqSb [\LockCx \Gamma] {\LockSb {\prn {\xi \circ \delta}}} {\LockSb \xi \circ
      \LockSb \delta} {\LockCx \Xi} <n>} \and
  \inferrule [sb-eq/lock-id]
  {\Mor [\mu] n m \\
    \IsCx \Gamma}
  {\EqSb [\LockCx \Gamma] {\LockSb \ISb} \ISb {\LockCx \Gamma} <n>} \and
  \inferrule [sb-eq/id-key]
  {\Mor [\mu] n m \\
    \IsCx \Gamma}
  {\EqSb [\LockCx \Gamma] {\Key {1_\mu} \Gamma} \ISb {\LockCx \Gamma} <n>} \and
  \inferrule [sb-eq/nat-key]
  {\mu , \Mor [\nu] n m \\
    \Mor [\alpha] \nu \mu \\
    \IsCx {\Gamma , \Delta} \\
    \IsSb \delta \Delta}
  {\EqSb [\LockCx \Gamma] {\Key \alpha \Delta \circ \LockSb \delta} {\LockSb \delta
      <\nu> \circ \Key \alpha \Gamma} {\LockCx \Delta <\nu>} <n>} \and
  \inferrule [sb-eq/comp-key]
  {\Mor [\mu , \nu , \rho] n m \\
    \Mor [\alpha] \nu \mu \\
    \Mor [\beta] \rho \nu \\
    \IsCx \Gamma}
  {\EqSb [\LockCx \Gamma] {\Key {\alpha \circ \beta} \Gamma} {\Key \alpha \Gamma \circ
      \Key \beta \Gamma} {\LockCx \Gamma <\rho>} <n>} \and
  \inferrule [sb-eq/whisk-key]
  {\Mor [\mu _ 0 , \mu _ 1] n m \\
    \Mor [\nu _ 0 , \nu _ 1] o n \\
    \Mor [\alpha] {\mu _ 1} {\mu _ 0} \\
    \Mor [\beta] {\nu _ 1} {\nu _ 0} \\
    \IsCx \Gamma}
  {\EqSb [\LockCx \Gamma <\mu _ 0 \circ \nu _ 0>] {\Key {\alpha \Whisker \beta} \Gamma}
    {\LockSb {\Key \alpha \Gamma} <\nu _ 1> \circ \Key \beta {\LockCx \Gamma <\mu _ 0>}}
    {\LockCx \Gamma <\mu _ 1 \circ \nu _ 1>} <o>} \and
  \inferrule [sb-eq/ext-type-beta]
  {\IsCx {\Gamma , \Delta} \\
    \IsSb \delta \Delta \\
    \IsTy [\LockCx \Delta] A <n> \\
    \IsTm [\LockCx \Gamma] a {\Sb A {\LockSb \delta}} <n>}
  {\EqSb {\Wk \circ \ESb \delta a} \delta \Delta} \and
  \inferrule [sb-eq/ext-type-eta]
  {\IsCx {\Gamma , \Delta} \\
    \IsTy [\LockCx \Delta] A <n> \\
    \IsSb \delta {\ECx \Delta A}}
  {\EqSb \delta {\ESb {\prn {\Wk \circ \delta}} {\Sb {\Var {0}} \delta}} {\ECx \Delta A}} \and
  \inferrule [sb-eq/ext-int-beta]
  {\IsCx {\Gamma , \Delta} \\
    \IsSb \delta \Delta \\
    \IsTm r {\I _ m}}
  {\EqSb {\WkI \circ \ESb \delta r} \delta \Delta} \and
  \inferrule [sb-eq/ext-int-eta]
  {\IsCx {\Gamma , \Delta} \\
    \IsSb \delta {\ICx \Delta}}
  {\EqSb \delta {\ESb {\prn {\WkI \circ \delta}} {\Sb \VarI \delta}} {\ICx \Delta}} \and
  \inferrule [sb-eq/exc-int-left-inv${ }^\dagger$]
  {\Mor [\mu] n m \\
    \IsCx \Gamma}
  {\EqSb [\LockCx {\ICx \Gamma}] {\ExcSbIinv \circ \ExcSbI} \ISb {\LockCx {\ICx \Gamma}} <n>} \and
  \inferrule [sb-eq/exc-int-right-inv${ }^\dagger$]
  {\Mor [\mu] n m \\
    \IsCx \Gamma}
  {\EqSb [\ICx {\LockCx \Gamma}] {\ExcSbI \circ \ExcSbIinv} \ISb {\ICx {\LockCx \Gamma}} <n>} \and
  \inferrule [sb-eq/exc-face-left-inv${ }^\dagger$]
  {\Mor [\mu] n m \\
    \IsCx \Gamma \\
    \IsTm \phi {\F _ m}}
  {\EqSb [\LockCx {\RCx \Gamma}] {\ExcSbRinv \circ \ExcSbR} \ISb {\LockCx {\RCx \Gamma}} <n>} \and
  \inferrule [sb-eq/exc-face-right-inv${ }^\dagger$]
  {\Mor [\mu] n m \\
    \IsCx \Gamma \\
    \IsTm \phi {\F _ m}}
  {\EqSb [\RCx {\LockCx \Gamma} [\ExcCofInv \phi]] {\ExcSbR \circ \ExcSbRinv} \ISb {\RCx {\LockCx
  \Gamma} [\ExcCofInv \phi]} <n>} \and
  \inferrule [sb-eq/face-res-uniq]
  {\IsCx {\Gamma , \Delta} \\
    \IsTm [\Delta] \phi {\F _ m} \\
    \IsSb \delta {\RCx \Gamma}}
  {\EqSb \delta {\RCx {\prn {\WkF \circ \delta}}} {\RCx \Gamma}} \and
  \inferrule [sb-eq/face-res-bin]
  {\IsCx {\Gamma , \Delta} \\
    \IsSb {\delta , \xi} \Delta \\
    \IsTm {\phi , \psi} {\F _ m} \\
    \EqTm {\phi \lor \psi} \Ftop {\F _ m} \\
    \EqSb [\RCx \Gamma] {\delta \circ \WkF} {\xi \circ \WkF} \Delta \\
    \EqSb [\RCx \Gamma [\psi]] {\delta \circ \WkF [\psi]} {\xi \circ \WkF [\psi]} \Delta}
  {\EqSb \delta \xi \Delta} \and
  \inferrule [sb-eq/face-res-null]
  {\IsCx {\Gamma , \Delta} \\
    \IsSb {\delta , \xi} \Delta \\
    \EqTm \Fbot \Ftop {\F _ m}}
  {\EqSb \delta \xi \Delta}
\end{mathparpagebreakable}

\paragraph*{Interval formation.}\hfill

\begin{mathparpagebreakable}
  \inferrule [int/join]
  {\IsCx \Gamma \\
    \IsTm {r , s} {\I _ m}}
  {\IsTm {r \lor s} {\I _ m}} \and
  \inferrule [int/meet]
  {\IsCx \Gamma \\
    \IsTm {r , s} {\I _ m}}
  {\IsTm {r \land s} {\I _ m}} \and
  \inferrule [int/bot]
  {\IsCx \Gamma}
  {\IsTm 0 {\I _ m}} \and
  \inferrule [int/top]
  {\IsCx \Gamma}
  {\IsTm 1 {\I _ m}} \and
  \inferrule [int/inv]
  {\IsCx \Gamma \\
    \IsTm r {\I _ m}}
  {\IsTm {\Iinv [r]} {\I _ m}} \and
  \inferrule [int/exc${ }^\dagger$]
  {\Mor [\mu] n m \\
    \IsCx \Gamma \\
    \IsTm r {\I _ m}}
  {\IsTm [\LockCx \Gamma] {\ExcIntInv r} {\I _ n} <n>} \and
  \inferrule [int/var]
  {\IsCx \Gamma}
  {\IsTm [\ICx \Gamma] \VarI {\I _ m}} \and
  \inferrule [int/sb]
  {\IsCx {\Gamma , \Delta} \\
    \IsSb \delta \Delta \\
    \IsTm [\Delta] r {\I _ m}}
  {\IsTm {\Sb r \delta} {\I _ m}}
\end{mathparpagebreakable}

\paragraph*{Interval equality.}\hfill

\begin{mathparpagebreakable}
  \inferrule [int-eq/ext-int-beta]
  {\IsCx {\Gamma , \Delta} \\
    \IsSb \delta \Delta \\
    \IsTm r {\I _ m}}
  {\EqTm {\Sb \VarI {\ESb \delta r}} r {\I _ m}} \and
  \inferrule [int-eq/res-eq]
  {\IsCx \Gamma \\
    \IsTm r {\I _ m}}
  {\EqTm [\RCx \Gamma [\EqBot r]] {\Sb r {\WkF [\EqBot r]}} 0 {\I _ m}} \and
  \inferrule [int-eq/exc-comp${ }^\dagger$]
  {\Mor [\mu] n m \\
    \Mor [\nu] o n \\
    \IsCx \Gamma \\
    \IsTm r {\I _ m}}
  {\EqTm [\LockCx {\LockCx \Gamma} <\nu>] {\ExcIntInv r <\mu \circ \nu>} {\ExcIntInv
  {\prn{\ExcIntInv r}} <\nu>} {\I _ o} <o>} \and
  \inferrule [int-eq/exc-id${ }^\dagger$]
  {\IsCx \Gamma \\
    \IsTm r {\I _ m}}
  {\EqTm {\ExcIntInv r <1>} r {\I _ m}} \and
  \inferrule [int-eq/exc-key${ }^\dagger$]
  {\Mor [\mu , \nu] n m \\
    \Mor [\alpha] \nu \mu \\
    \IsCx \Gamma \\
    \IsTm r {\I _ m}}
  {\EqTm [\LockCx \Gamma] {\Sb {\ExcIntInv r <\nu>} {\Key \alpha \Gamma}} {\ExcIntInv r} {\I _ n}
  <n>} \and
  \inferrule [int-eq/exc-sub${ }^\dagger$]
  {\Mor [\nu] n m \\
    \IsCx {\Gamma , \Delta} \\
    \IsSb \delta \Delta \\
    \IsTm r {\I _ m}}
  {\EqTm [\LockCx \Gamma] {\Sb {\ExcIntInv r} {\LockSb \delta}} {\ExcIntInv {\Sb r \delta}} {\I _
  n} <n>} \and
  \inferrule [int-eq/face-res-bin]
  {\IsCx \Gamma \\
    \IsTm {r , s} {\I _ m} \\
    \IsTm {\phi , \psi} {\F _ m} \\
    \EqTm {\phi \lor \psi} \Ftop {\F _ m} \\
    \EqTm [\RCx \Gamma] {\Sb r \WkF} {\Sb s \WkF} {\I _ m} \\
    \EqTm [\RCx \Gamma [\psi]] {\Sb r {\WkF [\psi]}} {\Sb s {\WkF [\psi]}} {\I _ m}}
  {\EqTm r s {\I _ m}} \and
  \inferrule [int-eq/face-res-null]
  {\IsCx \Gamma \\
    \IsTm {r , s} {\I _ m} \\
    \EqTm \Fbot \Ftop {\F _ m}}
  {\EqTm r s {\I _ m}}
\end{mathparpagebreakable}

\paragraph*{Face formation.}\hfill

\begin{mathparpagebreakable}
  \inferrule [face/eq]
  {\IsCx \Gamma \\
    \IsTm r {\I _ m}}
  {\IsTm {\EqBot r} {\F _ m}} \and
  \inferrule [face/join]
  {\IsCx \Gamma \\
    \IsTm {\phi , \psi} {\F _ m}}
  {\IsTm {\phi \lor \psi} {\F _ m}} \and
  \inferrule [face/meet]
  {\IsCx \Gamma \\
    \IsTm {\phi , \psi} {\F _ m}}
  {\IsTm {\phi \land \psi} {\F _ m}} \and
  \inferrule [face/bot]
  {\IsCx \Gamma}
  {\IsTm \Fbot {\F _ m}} \and
  \inferrule [face/top]
  {\IsCx \Gamma}
  {\IsTm \Ftop {\F _ m}} \and
  \inferrule [face/exc${ }^\dagger$]
  {\Mor [\mu] n m \\
    \IsCx \Gamma \\
    \IsTm \phi {\F _ m}}
  {\IsTm [\LockCx \Gamma] {\ExcCofInv \phi} {\F _ n} <n>} \and
  \inferrule [face/sb]
  {\IsCx {\Gamma , \Delta} \\
    \IsSb \delta \Delta \\
    \IsTm [\Delta] \phi {\F _ m}}
  {\IsTm {\Sb \phi \delta} {\F _ m}} \and
\end{mathparpagebreakable}

\paragraph*{Face equality.}\hfill

\begin{mathparpagebreakable}
  \inferrule [face-eq/non-contr]
  {\IsCx \Gamma \\
    \IsTm r {\I _ m}}
  {\EqTm {\EqBot r \land \EqBot {\Iinv [r]}} \Fbot {\F _ m}} \and
  \inferrule [face-eq/exc-comp${ }^\dagger$]
  {\Mor [\mu] n m \\
    \Mor [\nu] o n \\
    \IsCx \Gamma \\
    \IsTm \phi {\F _ m} <m>}
  {\EqTm [\LockCx {\LockCx \Gamma} <\nu>] {\ExcCofInv \phi <\mu \circ \nu>} {\ExcCofInv {\prn
  {\ExcCofInv \phi}} <\nu>} {\F _ o} <o>} \and
  \inferrule [face-eq/exc-id${ }^\dagger$]
  {\IsCx \Gamma \\
    \IsTm \phi {\F _ m}}
  {\EqTm {\ExcCofInv \phi <1>} \phi {\F _ m}} \and
  \inferrule [face-eq/exc-key${ }^\dagger$]
  {\mu , \Mor [\nu] n m \\
    \Mor [\alpha] \nu \mu \\
    \IsCx \Gamma \\
    \IsTm \phi {\F _ m} <m>}
  {\EqTm [\LockCx \Gamma] {\Sb {\ExcCofInv \phi <\nu>} {\Key \alpha \Gamma}} {\ExcCofInv \phi} {\F
  _ n} <n>} \and
  \inferrule [face-eq/exc-eq${ }^\dagger$]
  {\Mor [\mu] n m \\
    \IsCx \Gamma \\
    \IsTm r {\I _ m} <m>}
  {\EqTm {\ExcIntInv {\EqBot r}} {\EqBot {\ExcIntInv r}} {\F _ n} <n>} \and
  \inferrule [face-eq/exc-sub${ }^\dagger$]
  {\Mor [\mu] n m \\
    \IsCx {\Gamma , \Delta} \\
    \IsSb \delta \Delta \\
    \IsTm \phi {\F _ m}}
  {\EqTm [\LockCx \Gamma] {\Sb {\ExcCofInv \phi} {\LockSb \delta}} {\ExcCofInv {\Sb \phi \delta}}
  {\F _ n} <n>} \and
  \inferrule [face-eq/res-eq-top]
  {\IsCx \Gamma \\
    \IsTm \phi {\F _ m}}
  {\EqTm [\RCx \Gamma] {\Sb \phi \WkF} \Ftop {\F _ m}} \and
  \inferrule [face-eq/eq-zero]
  {\IsCx \Gamma}
  {\EqTm {\EqBot 0} \Ftop {\F _ m}} \and
  \inferrule [face-eq/face-res-bin]
  {\IsCx \Gamma \\
    \IsTm {\phi , \psi , \chi _ 0 , \chi _ 1} {\F _ m} \\
    \EqTm {\phi \lor \psi} \Ftop {\F _ m} \\
    \EqTm [\RCx \Gamma] {\Sb {\chi _ 0} \WkF} {\Sb {\chi _ 1} \WkF} {\F _ m} \\
    \EqTm [\RCx \Gamma [\psi]] {\Sb {\chi _ 0} {\WkF [\psi]}} {\Sb {\chi _ 1} {\WkF [\psi]}} {\F _
    m}}
  {\EqTm {\chi _ 0} {\chi _ 1} {\F _ m}} \and
  \inferrule [face-eq/face-res-null]
  {\IsCx \Gamma \\
    \IsTm {\chi _ 0 , \chi _ 1} {\F _ m} \\
    \EqTm \Fbot \Ftop {\F _ m}}
  {\EqTm {\chi _ 0} {\chi _ 1} {\F _ m}}
\end{mathparpagebreakable}

\paragraph*{Type formation.}\hfill

\begin{mathparpagebreakable}
  \inferrule [type/pi]
  {\Mor [\mu] n m \\
    \IsCx \Gamma \\
    \IsTy [\LockCx \Gamma] A <n> \\
    \IsTy [\ECx \Gamma A] B}
  {\IsTy {\Fn A B}} \and
  \inferrule [type/path]
  {\IsCx \Gamma \\
    \IsTy [\ICx \Gamma] A \\
    \IsTm a {\Sb A {\ESb \ISb 0}} \\
    \IsTm b {\Sb A {\ESb \ISb 1}}}
  {\IsTy {\Path A a b}} \and
  \inferrule [type/mod]
  {\Mor [\mu] n m \\
    \IsCx \Gamma \\
    \IsTy [\LockCx \Gamma] A <n>}
  {\IsTy {\Modify {A}}} \and
  \inferrule [type/sys]
  {\IsCx \Gamma \\
    \IsTm {\phi , \psi} {\F _ m} \\
    \EqTm {\phi \lor \psi} \Ftop {\F _ m} \\
    \IsTy [\RCx \Gamma] A \\
    \IsTy [\RCx \Gamma [\psi]] B \\
    \EqTy [\RCx \Gamma [\phi \land \psi]] {\Sb A {\RCx {\WkF [\phi \land \psi]}}}
    {\Sb B {\RCx {\WkF [\phi \land \psi]} [\psi]}}}
  {\IsTy {\SysBin \phi A \psi B}} \and
  \inferrule [type/sb]
  {\IsCx {\Gamma , \Delta} \\
    \IsSb \delta \Delta \\
    \IsTy [\Delta] A}
  {\IsTy {\Sb A \delta}}
\end{mathparpagebreakable}

\pagebreak

\paragraph*{Type equality.}\hfill

\begin{mathparpagebreakable}
  \inferrule [type-eq/sys-top]
  {\IsCx \Gamma \\
    \IsTm \phi {\F _ m} \\
    \IsTy [\RCx \Gamma [\Ftop]] A \\
    \IsTy [\RCx \Gamma] B \\
    \EqTy [\RCx \Gamma] {\Sb A {\RCx \WkF [\Ftop]}} B}
  {\EqTy {\SysBin \Ftop A \phi B} {\Sb A {\RCx \ISb [\Ftop]}}} \and
  \inferrule [type-eq/face-res-bin]
  {\IsCx \Gamma \\
    \IsTm {\phi , \psi} {\F _ m} \\
    \EqTm {\phi \lor \psi} \Ftop {\F _ m} \\
    \IsTy {A , B} \\
    \EqTy [\RCx \Gamma] {\Sb A \WkF} {\Sb B \WkF} \\
    \EqTy [\RCx \Gamma [\psi]] {\Sb A {\WkF [\psi]}} {\Sb B {\WkF [\psi]}}}
  {\EqTy A B} \and
  \inferrule [type-eq/face-res-null]
  {\IsCx \Gamma \\
    \EqTm \Fbot \Ftop {\F _ m} \\
    \IsTy {A , B}}
  {\EqTy A B}
\end{mathparpagebreakable}

\paragraph*{Term formation.}\hfill

\begin{mathparpagebreakable}
  \inferrule [term/pi-lam]
  {\Mor [\mu] n m \\
    \IsCx \Gamma \\
    \IsTy [\LockCx \Gamma] A <n> \\
    \IsTy [\ECx \Gamma A] B \\
    \IsTm [\ECx \Gamma A] b B}
  {\IsTm {\Lam b} {\Fn A B}} \and
  \inferrule [term/pi-app]
  {\Mor [\mu] n m \\
    \IsCx \Gamma \\
    \IsTy [\LockCx \Gamma] A <n> \\
    \IsTy [\ECx \Gamma A] B \\
    \IsTm f {\Fn A B} \\
    \IsTm [\LockCx \Gamma] a A <n>}
  {\IsTm {\App f a} {\Sb B {\ESb \ISb a}}} \and
  \inferrule [term/path-abs]
  {\IsCx \Gamma \\
    \IsTy [\ICx \Gamma] A \\
    \IsTm [\ICx \Gamma] a A}
  {\IsTm {\PathAbs a} {\Path A {\Sb a {\ESb \ISb 0}} {\Sb a {\ESb \ISb 1}}}} \and
  \inferrule [term/path-app]
  {\IsCx \Gamma \\
    \IsTy [\ICx \Gamma] A \\
    \IsTm a {\Sb A {\ESb \ISb 0}} \\
    \IsTm b {\Sb A {\ESb \ISb 1}} \\
    \IsTm p {\Path A a b} \\
    \IsTm r {\I _ m}}
  {\IsTm {\PathApp p r} {\Sb A {\ESb \ISb r}}} \and
  \inferrule [term/mod-mod]
  {\Mor [\mu] n m \\
    \IsCx \Gamma \\
    \IsTy [\LockCx \Gamma] A <n>\\
    \IsTm [\LockCx \Gamma] a A <n>}
  {\IsTm {\MkBox a} {\Modify {A}}} \and
  \inferrule [term/mod-let]
  {\Mor [\mu] n m \\
    \Mor [\nu] o n \\
    \IsCx \Gamma \\
    \IsTy [\LockCx {\LockCx \Gamma} <\nu>] A <o> \\
    \IsTm [\LockCx \Gamma] a {\Modify [\nu] {A}} <n> \\
    \IsTy [\ECx \Gamma {\Modify [\nu] {A}}] B \\
    \IsTm [\ECx \Gamma A <\mu \circ \nu>] b {\Sb B {\ESb \Wk {\MkBox [\nu] {\Var {0}}}}}}
  {\IsTm {\LetMod a b <\nu> [\mu]} {\Sb B {\ESb \ISb a}}} \and
  \inferrule [term/sys-bin]
  {\IsCx \Gamma \\
    \IsTy A \\
    \IsTm {\phi , \psi} {\F _ m} \\
    \EqTm {\phi \lor \psi} \Ftop {\F _ m} \\
    \IsTm [\RCx \Gamma] a {\Sb A \WkF} \\
    \IsTm [\RCx \Gamma [\psi]] b {\Sb A {\WkF [\psi]}} \\
    \EqTm [\RCx \Gamma [\phi \land \psi]] {\Sb a {\RCx {\WkF [\phi \land \psi]}}}
    {\Sb b {\RCx {\WkF [\phi \land \psi]} [\psi]}} {\Sb A {\WkF [\phi \land \psi]}}}
  {\IsTm {\SysBin \phi a \psi b} A} \and
  \inferrule [term/sys-null]
  {\IsCx \Gamma \\
    \IsTy A \\
    \EqTm \Fbot \Ftop {\F _ m}}
  {\IsTm \SysNull A} \and
  \inferrule [term/comp]
  {\IsCx \Gamma \\
    \IsTy [\ICx \Gamma] A \\
    \IsTm \phi {\F_m} \\
    \IsTm [\ICx {\RCx \Gamma}] u {\Sb A {\ICx \WkF}} \\
    \IsTm {u_0} {\Sb A {\ESb {\ISb} {0}}} \\
    \EqTm [\RCx \Gamma] {\Sb u {\ESb {\ISb} {0}}}
    {\Sb {u_0} {\WkF}}
    {\Sb A {\ESb {\WkF} {0}}}}
  {\IsTm {\comp \, [\phi \mapsto u] \, u_0} {\Sb A {\ESb {\ISb} {1}}}} \and
  \inferrule [term/var]
  {\Mor [\mu] n m \\
    \IsCx \Gamma \\
    \IsTy [\LockCx \Gamma] A <n>}
  {\IsTm [\LockCx {\ECx \Gamma A}] {\Var {0}} {\Sb A {\LockSb \Wk}} <n>} \and
  \inferrule [term/sb]
  {\IsCx {\Gamma , \Delta} \\
    \IsSb \delta \Delta \\
    \IsTm [\Delta] a A}
  {\IsTm {\Sb a \delta} {\Sb A \delta}}
\end{mathparpagebreakable}

\paragraph*{Term equality.}\hfill

\begin{mathparpagebreakable}
  \inferrule [term-eq/pi-beta]
  {\Mor [\mu] n m \\
    \IsCx \Gamma \\
    \IsTy [\LockCx \Gamma] A <n> \\
    \IsTy [\ECx \Gamma A] B \\
    \IsTm [\LockCx \Gamma] a A <n> \\
    \IsTm [\ECx \Gamma A] b B}
  {\EqTm {\App {\Lam b} a} {\Sb b {\ESb \ISb a}} {\Sb B {\ESb \ISb a}}} \and
  \inferrule [term-eq/pi-eta]
  {\Mor [\mu] n m \\
    \IsCx \Gamma \\
    \IsTy [\LockCx \Gamma] A <n> \\
    \IsTy [\ECx \Gamma A] B \\
    \IsTm f {\Fn A B}}
  {\EqTm f {\Lam {\App {\Sb f \Wk} {\Var {0}}}} {\Fn A B}} \and
  \inferrule [term-eq/path-beta]
  {\IsCx \Gamma \\
    \IsTy [\ICx \Gamma] A \\
    \IsTm [\ICx \Gamma] a A \\
    \IsTm r {\I _ m}}
  {\EqTm {\PathApp {\PathAbs a} r} {\Sb a {\ESb \ISb r}} {\Sb A {\ESb \ISb r}}} \and
  \inferrule [term-eq/path-eta]
  {\IsCx \Gamma \\
    \IsTy [\ICx \Gamma] A \\
    \IsTm {a _ 0} {\Sb A {\ESb \ISb 0}} \\
    \IsTm {a _ 1} {\Sb A {\ESb \ISb 1}} \\
    \IsTm p {\Path A {a _ 0} {a _ 1}}}
  {\EqTm p {\PathAbs {\PathApp {\Sb p \WkI} \VarI}} {\Path A {a _ 0} {a _ 1}}} \and
  \inferrule [term-eq/mod-beta]
  {\Mor [\mu] n m \\
    \Mor [\nu] o n \\
    \IsCx \Gamma \\
    \IsTy [\LockCx {\LockCx \Gamma} <\nu>] A <o> \\
    \IsTy [\ECx \Gamma {\Modify [\nu] {A}}] B \\
    \IsTm [\LockCx {\LockCx \Gamma} <\nu>] a A <o> \\
    \IsTm [\ECx \Gamma A <\mu \circ \nu>] b {\Sb B {\ESb \Wk {\MkBox [\nu] {\Var {0}}}}}}
  {\EqTm {\LetMod {\MkBox [\nu] a} b <\nu> [\mu]} {\Sb b {\ESb \ISb a}} {\Sb B {\ESb \ISb {\MkBox
  [\nu] a}}}} \and
  \inferrule [term-eq/ext-type-beta]
  {\IsCx {\Gamma , \Delta} \\
    \IsSb \delta \Delta \\
    \IsTy [\LockCx \Delta] A <n> \\
    \IsTm [\LockCx \Gamma] a {\Sb A {\LockSb \delta}} <n>}
  {\EqTm [\LockCx \Gamma] {\Sb {\Var {0}} {\LockSb {\ESb \delta a}}} a {\Sb A {\LockSb \delta}}}
  \and
  \inferrule [term-eq/sys-top]
  {\IsCx \Gamma \\
    \IsTy A \\
    \IsTm \phi {\F _ m} \\
    \IsTm [\RCx \Gamma [\Ftop]] a {\Sb A {\WkF [\Ftop]}} \\
    \IsTm [\RCx \Gamma] b {\Sb A \WkF} \\
    \EqTm [\RCx \Gamma] {\Sb a {\RCx \WkF [\Ftop]}} b A}
  {\EqTm {\SysBin \Ftop a \phi b} {\Sb a {\RCx \ISb [\Ftop]}} A} \and
  \inferrule [term-eq/comp-face]
  {\IsCx \Gamma \\
    \IsTy [\ICx \Gamma] A \\
    \IsTm \phi {\F_m} \\
    \IsTm [\ICx {\RCx \Gamma}] u {\Sb A {\ICx \WkF}} \\
    \IsTm {u_0} {\Sb A {\ESb {\ISb} {0}}} \\
    \EqTm [\RCx \Gamma] {\Sb u {\ESb {\ISb} {0}}}
    {\Sb {u_0} {\WkF}}
    {\Sb A {\ESb {\WkF} {0}}} \\
    \EqTm {\phi} {\top} {\F_m}}
  {\EqTm {\comp \, [\phi \mapsto u] \, u_0}
    {\Sb u {\ESb {\RCx \ISb} {1}}}
    {\Sb A {\ESb {\ISb} {1}}}} \and
  \inferrule [term-eq/comp-mod${ }^\dagger$]
  {\IsCx \Gamma \\
    \Mor [\mu] n m \\
    \IsTy [\LockCx {\ICx \Gamma}] {A} <n> \\
    \IsTm \phi {\F_m} \\
    \IsTm [\LockCx {\ICx {\RCx \Gamma}}] u {\Sb A {\LockSb {\ICx \WkF}}} <n> \\
    \IsTm [\LockCx \Gamma] {u_0} {\Sb A {\LockCx {\ESb {\ISb} {0}}}} <n> \\
    \EqTm [\LockCx {\RCx \Gamma}] {\Sb u {\LockSb {\ESb {\ISb} {0}}}}
    {\Sb {u_0} {\LockSb \WkF}}
    {\Sb A {\LockSb {\ESb {\WkF} {0}}}} <n>}
  {\EqTm {\MkBox {\comp \, [\ExcCofInv \phi \mapsto \Sb u {\ExcSbI \circ \ExcSbR}] \,
        u_0}}
    {\comp \, [\phi \mapsto \MkBox{u}]\, \MkBox{u_0}}
    {\Sb {\Modify{A}} {\ESb {\ISb} {1}}}} \and
  \inferrule [term-eq/comp-pi${ }^\dagger$]
  {\IsCx \Gamma \\
    \Mor [\mu] n m \\
    \IsTy [\LockCx {\ICx \Gamma}] A <n> \\
    \IsTy [\ECx {\ICx \Gamma} A <\mu>] B <m> \\
    \IsTm \phi {\F_m} \\
    \IsTm [\ICx {\RCx \Gamma}] f {\Sb {(\Fn A B)} {\ICx \WkF}} \\
    \IsTm {f_0} {\Sb {(\Fn A B)} {\ESb {\ISb} {0}}} \\
    \EqTm [\RCx \Gamma] {\Sb f {\ESb {\ISb} {0}}}
      {\Sb {f_0} {\WkF}}
      {\Sb {(\Fn A B)} {\ESb {\WkF} {0}}} \\
    \IsTm [\LockCx \Gamma] {a_1} {\Sb A {\LockSb {\ESb {\ISb} 1}}} <n> \\
    \DefEqTm [\LockCx {\ICx \Gamma}] {w}
      {\Sb {\prn {\FILL \, [\, ] \, {a_1}}} {\ExcSbI}}
      {\Sb {A} {\LockSb{\Iinv[\VarI]}}} <n> \\
    \DefEqTm [\LockCx {\ICx \Gamma}] {v} {\Sb {w} {\LockSb {\Iinv [\VarI]}}} {A} <n>}
  {\EqTm {\App {(\comp \, [\phi \mapsto f] \, f_0)} {a_1}}
    {\comp \, [\phi \mapsto \App f {\Sb v {\LockSb {\ICx {\WkF}}}}]
      \, \App {f_0} {\Sb v {\LockSb {\ESb {\ISb} 0}}}}
    {\Sb B {\ESb {\ESb {\ISb} {1}} {a_1}}}} \and
  \inferrule [term-eq/face-res-bin]
  {\IsCx \Gamma \\
    \IsTm {\phi , \psi} {\F _ m} \\
    \EqTm {\phi \lor \psi} \Ftop {\F _ m} \\
    \IsTy A \\
    \IsTm {a , b} A \\
    \EqTm [\RCx \Gamma] {\Sb a \WkF} {\Sb b \WkF} {\Sb A \WkF} \\
    \EqTm [\RCx \Gamma [\psi]] {\Sb a {\WkF [\psi]}} {\Sb b {\WkF [\psi]}} {\Sb A {\WkF [\psi]}}}
  {\EqTm a b A} \and
  \inferrule [term-eq/face-res-null]
  {\IsCx \Gamma \\
    \EqTm \Fbot \Ftop {\F _ m} \\
    \IsTy A \\
    \IsTm {a , b} A}
  {\EqTm a b A}
\end{mathparpagebreakable}

\paragraph*{Derived Definitions.}\hfill

\begin{mathparpagebreakable}
  \inferH {sb/plus-int}
  {\IsCx {\Gamma , \Delta} \\
    \IsSb \delta \Delta}
  {\DefEqSb [\ICx \Gamma] {\ICx \delta} {\ESb {\prn {\delta \circ \WkI}} \VarI} {\ICx \Delta}} \and
  \inferrule [sb/exc-int${ }^\dagger$]
  {\Mor [\mu] n m \\
    \IsCx \Gamma}
  {\DefEqSb [\LockCx {\ICx \Gamma}] {\ExcSbI} {\ESb {\LockSb \WkI} {\ExcIntInv {\prn {\VarI}}}} {\ICx
  {\LockCx
  \Gamma} <n>}} \and
  \inferrule [sb/exc-face${ }^\dagger$]
  {\Mor [\mu] n m \\
    \IsCx \Gamma \\
    \IsTm \phi {\F _ m}}
  {\DefEqSb [\LockCx {\RCx \Gamma}] {\ExcSbR} {\RCx {\LockSb \WkF} [\ExcCofInv \phi]} {\RCx
  {\LockCx \Gamma} [\ExcCofInv \phi]}}
\end{mathparpagebreakable}

\section{Models of \texorpdfstring{\CubicalMTT{}}{Cubical MTT}}
\label{app:models}

\begin{defi}
  A modal context structure on a mode theory $\Mode$ is a strict 2-functor
  $\Mor[\Interp{-}]{\Coop{\Mode}}{\CAT}$ such that for each mode $m : \Mode$, $\Interp{m}$ has a
  terminal object.
\end{defi}

\begin{defi}
  A modal natural model on a modal context structure consists of
  \begin{itemize}
    \item for each mode $m : \Mode$, a presheaf $\SynTy{m} : \PSH{\Interp{m}}$,
    \item for each mode $m : \Mode$, a presheaf $\SynTm{m} : \PSH{\Interp{m}}$,
    \item for each mode $m : \Mode$, a natural transformation
    $\Mor[\SynEl{m}]{\SynTm{m}}{\SynTy{m}}$,
  \end{itemize}
  such that
  \begin{itemize}
    \item for any modes $m,n : \Mode$ and modality $\Mor[\mu]{n}{m}$, it holds that
    $\Mor[\Pre{\Interp{\mu}}{\SynEl{n}}]{\Pre{\Interp{\mu}}{\SynTm{n}}}{\Pre{\Interp{\mu}}{\SynTy{n}}}$
    is a representable natural transformation.
  \end{itemize}
\end{defi}

The type formers are the same as those in \cite[Section 5.2]{gratzer:journal:2021} and
\cite{awodey:2018} except for identity types which we do not have and path types which will come
later.

\begin{defi}
  A modal interval structure on a modal context structure consists of
  \begin{itemize}
    \item for each mode $m : \Mode$, a De Morgan algebra $\IntOb{m} : \Interp{m}$,
    \item for any modes $m,n : \Mode$ and each modality $\Mor[\mu]{n}{m}$, a natural transformation
    of De Morgan algebras $\Mor[\IntArr{\mu}]{\Yo{\IntOb{m}}}{\Pre{\Interp{\mu}}{\Yo{\IntOb{n}}}}$,
  \end{itemize}
  such that
  \begin{itemize}
    \item the preshaves $\Yo{\IntOb{m}}$ and morphisms $\IntArr{\mu}$ assemble into a lax natural
    transformation $\Mor[\Mor{\Interp{-}}{\Op{\SET}}]{\Coop{\Mode}}{\CAT}$, where
    $\Mor[\Op{\SET}]{\Coop{\Mode}}{\CAT}$ is the functor constantly equal to $\Op{\SET}$,
    \item for each mode $m : \Mode$ and context $\Gamma : \Interp{m}$, the product
    $\Gamma\times\IntOb{m}$ exists,
    \item for any modes $m,n : \Mode$, each modality $\Mor[\mu]{n}{m}$, and each context $\Gamma :
    \Interp{m}$, the uniquely determined dashed arrow in the following diagram has an inverse:
    \[\begin{tikzpicture}[diagram]
      \node (LN) {$\Gamma$};
      \node [below = of LN] (LS) {$\Gamma\times\IntOb{m}$};
      \node [right = 4cm of LN] (NW) {$\Interp{\mu}\Gamma$};
      \node [below = of NW] (SW) {$\Interp{\mu}\prn{\Gamma\times\IntOb{m}}$};
      \node [right = 3cm of NW] (NE) {$\Interp{\mu}\Gamma\times\IntOb{n}$};
      \node [right = 3cm of SW] (SE) {$\IntOb{n}$};
      \node [below = 2.5cm of SE] (BE) {$\IntOb{m}$};
      \node [left = 3cm of BE] (BW) {$\Gamma\times\IntOb{m}$};
      \path [->] (LS) edge node [left] {$\pi_1$} (LN);
      \path [->] (SW) edge node [left] {$\Interp{\mu}\pi_1$} (NW);
      \path [->] (BW) edge node [below] {$\pi_2$} (BE);
      \path [->] (SW) edge node [below] {$\IntArr{\mu,\Gamma\times\IntOb{m}}\prn{\pi_2}$} (SE);
      \path [->] (NE) edge node [above] {$\pi_1$} (NW);
      \path [->] (NE) edge node [right] {$\pi_2$} (SE);
      \path [dashed,->] (SW) edge (NE);
      \node [below = 1cm of LN] (LM) {};
      \node [right = 0.3cm of LM] (LW) {};
      \node [right = 2.5cm of LW] (LE) {};
      \path [|->] (LW) edge node [above] {$\Interp{\mu}$} (LE);
      \node [left = 1.5cm of BE] (BM) {};
      \node [above = 0.3cm of BM] (BS) {};
      \node [above = 1.6cm of BS] (BN) {};
      \path [|->] (BS) edge node [right] {$\IntArr{\mu,\Gamma\times\IntOb{m}}$} (BN);
    \end{tikzpicture}\]
  \end{itemize}
\end{defi}

\begin{defi}
  A modal face structure on a modal interval structure consists of
  \begin{itemize}
    \item a lax natural transformation
    $\Mor[\Mor[\FPSH]{\Interp{-}}{\Op{\SET}}]{\Coop{\Mode}}{\CAT}$, where
    $\Op{\SET}$ is the functor constantly equal to $\Op{\SET}$,
    \item for each mode $m : \Mode$, a natural transformation
    $\Mor[\EqZ{m}]{\Yo{\IntOb{m}}}{\Op{\FPSH{m}}}$,
  \end{itemize}
  such that
  \begin{itemize}
    \item $\FPSH$ factors through $\Op{\BDISLAT}$, the functor constantly equal to the opposite of
    the category of bounded distributive lattices,
    \item for each mode $m : \Mode$, each context $\Gamma : \Interp{m}$, and each interval term
    $\Mor[r]{\Gamma}{\IntOb{m}}$, it holds that $\FPSH*{\mu}{\Gamma}{\EqZ{m}{\Gamma}{r}} =
    \EqZ{n}{\Interp{\mu}\prn{\Gamma}}{\IntArr{\mu,\Gamma}\prn{r}}$,
    \item for each mode $m : \Mode$ and each context $\Gamma : \Interp{m}$, it holds that
    $\EqZ{m}{\Gamma}{0} = \Ftop$, where $0$ is from $\IntOb{m}$ being a De Morgan algebra, and
    $\Ftop$ is from $\FPSH{m}{\Gamma}$ being a bounded lattice,
    \item for each mode $m : \Mode$, each context $\Gamma : \Interp{m}$, and each interval term
    $\Mor[r]{\Gamma}{\IntOb{m}}$, it holds that $\EqZ{m}{\Gamma}{r}\land\EqZ{m}{\Gamma}{\Iinv[r]} =
    \Fbot$, where $\Iinv[r]$ is from $\IntOb{m}$ being a De Morgan algebra, and $\land$ and $\Fbot$
    are from $\FPSH{m}{\Gamma}$ being a bounded lattice.
  \end{itemize}
\end{defi}

\begin{defi}
  A modal restriction structure on a modal face structure consists of
  \begin{itemize}
    \item for each mode $m : \Mode$, each context $\Gamma : \Interp{m}$, and each face
    $\Mor[\phi]{\Yo{\Gamma}}{\FPSH{m}}$, a choice of pullback of the form:
    \[\begin{tikzpicture}[diagram]
      \node (SE) {$\FPSH{m}$};
      \node [left = of SE] (SW) {$\Yo{\Gamma}$};
      \node [above = 1.5cm of SE] (NE) {$1$};
      \node [above = 1.5cm of SW] (NW) {$\Yo{\Restr{m}{\Gamma}{\phi}}$};
      \path [->] (NE) edge node [right] {$\Ftop$} (SE);
      \path [->] (SW) edge node [below] {$\phi$} (SE);
      \path [->] (NW) edge (NE);
      \path [->] (NW) edge node [left] {$\Yo{\WeakF{m}{\Gamma}{\phi}}$} (SW);
    \end{tikzpicture}\]
  \end{itemize}
  such that
  \begin{itemize}
    \item for all modes $m,n : \Mode$, each modality $\Mor[\mu]{n}{m}$, each context $\Gamma :
    \Interp{m}$, and each face $\Mor[\phi]{\Yo{\Gamma}}{\FPSH{m}}$, the uniquely determined dashed
    arrow in the following diagram has an inverse:
    \[\begin{tikzpicture}[diagram]
      \node (SE) {$\FPSH{n}$};
      \node [left = 3cm of SE] (SW) {$\Yo{\Interp{\mu}\Gamma}$};
      \node [above = of SE] (NE) {$1$};
      \node [above = of SW] (NW)
      {$\Yo{\Restr{n}{\Interp{\mu}\Gamma}{\FPSH*{\mu}{\Gamma}{\phi}}}$};
      \path [->] (NE) edge node [right] {$\Ftop$} (SE);
      \path [->] (SW) edge node [below] {$\FPSH*{\mu}{\Gamma}{\phi}$} (SE);
      \path [->] (NW) edge (NE);
      \path [->] (NW) edge node [left]
      {$\Yo{\WeakF{n}{\Interp{\mu}\Gamma}{\FPSH*{\mu}{\Gamma}{\phi}}}$} (SW);
      \node [above = 1cm of NW] (NNW) {};
      \node [left = 2.8cm of NNW] (NNWW) {$\Yo{\Interp{\mu}\prn{\Restr{m}{\Gamma}{\phi}}}$};
      \path [->,bend left=11] (NNWW) edge (NE);
      \path [->,bend right=39] (NNWW) edge node [left] {$\Yo{\Interp{\mu}\WeakF{m}{\Gamma}{\phi}}$}
      (SW);
      \path [->,dashed] (NNWW) edge (NW);
    \end{tikzpicture}\]
    Here, the commutativity of the outer square follows from the following calculation:
    \begin{align*}
      \FPSH*{\mu}{\Gamma}{\phi}\circ\Yo{\Interp{\mu}\WeakF{m}{\Gamma}{\phi}} & =
      \FPSH{n}{\Interp{\mu}\WeakF{m}{\Gamma}{\phi}}{\FPSH*{\mu}{\Gamma}{\phi}} \\
      & =
      \prn{\prn{\FPSH{n}\circ\Interp{\mu}}\prn{\WeakF{m}{\Gamma}{\phi}}\circ\FPSH*{\mu}{\Gamma}}\prn{\phi}
       \\
      & =
      \prn{\FPSH*{\mu}{\Restr{m}{\Gamma}{\phi}}\circ\FPSH{m}{\WeakF{m}{\Gamma}{\phi}}}\prn{\phi} \\
      & = \FPSH*{\mu}{\Restr{m}{\Gamma}{\phi}}{\Yo{\WeakF{m}{\Gamma}{\phi}}\circ\phi} \\
      & = \FPSH*{\mu}{\Restr{m}{\Gamma}{\phi}}{\Ftop} \\
      & = \Ftop.
    \end{align*}
  \end{itemize}
\end{defi}

\begin{rem}
  For each mode $m : \Mode$, each context $\Gamma : \Interp{m}$, and any faces
  $\Mor[\phi,\psi]{\Yo{\Gamma}}{\FPSH{m}}$ with $\phi\leq\psi$, consider the following diagram:
  \[\begin{tikzpicture}[diagram]
    \node (SE) {$\FPSH{m}$};
    \node [left = of SE] (SW) {$\Yo{\Gamma}$};
    \node [above = 1.5cm of SE] (NE) {$1$};
    \node [above = 1.5cm of SW] (NW) {$\Yo{\Restr{m}{\Gamma}{\psi}}$};
    \path [->] (NE) edge node [right] {$\Ftop$} (SE);
    \path [->,bend left=20] (SW) edge node [below] {$\psi$} (SE);
    \path [->] (NW) edge (NE);
    \path [->] (NW) edge node [left] {$\Yo{\WeakF{m}{\Gamma}{\psi}}$} (SW);
    \node [above = 1cm of NW] (NNW) {};
    \node [left = 2.3cm of NNW] (NNWW) {$\Yo{\Restr{m}{\Gamma}{\phi}}$};
    \path [->,bend left=14] (NNWW) edge (NE);
    \path [->,bend right=37] (NNWW) edge node [left] {$\Yo{\WeakF{m}{\Gamma}{\phi}}$} (SW);
    \path [->,bend right=20] (SW) edge node [below] {$\phi$} (SE);
  \end{tikzpicture}\]
  We can calculate
  \begin{align*}
    \psi\circ\Yo{\WeakF{m}{\Gamma}{\phi}} & = \FPSH{m}{\WeakF{m}{\Gamma}{\phi}}{\psi} \\
    & \geq \FPSH{m}{\WeakF{m}{\Gamma}{\phi}}{\phi} \\
    & = \phi\circ\Yo{\WeakF{m}{\Gamma}{\phi}} \\
    & = \Ftop,
  \end{align*}
  and thus $\psi\circ\Yo{\WeakF{m}{\Gamma}{\phi}} = \Ftop$, implying the outer square commutates,
  and we thus get a canonical morphism $\Mor{\Restr{m}{\Gamma}{\phi}}{\Restr{m}{\Gamma}{\psi}}$.
\end{rem}

\begin{defi}
  A modal face sturcture and a modal natural model (both on the same modal context sturcture) has
  systems if
  \begin{itemize}
    \item for each mode $m : \Mode$, each context $\Gamma : \Interp{m}$, any faces $\phi,\psi :
    \FPSH{m}{\Gamma}$, each presheaf $X : \PSH{\Interp{m}}$, and each commuting diagram
    \begin{equation*}
      \begin{tikzpicture}[diagram]
        \node (NW) {$\Yo{\Restr{m}{\Gamma}{\phi\land\psi}}$};
        \node [right = 4cm of NW] (N) {$\Yo{\Restr{m}{\Gamma}{\psi}}$};
        \node [below = of NW] (W) {$\Yo{\Restr{m}{\Gamma}{\phi}}$};
        \node [right = 4cm of W] (C) {$\Yo{\Restr{m}{\Gamma}{\phi\lor\psi}}$};
        \node [right = 2cm of C] (E) {};
        \node [below = 1cm of E] (SE) {$X$};
        \path [->] (NW) edge (N);
        \path [->] (NW) edge (W);
        \path [->] (N) edge (C);
        \path [->] (W) edge (C);
        \path [->] (W) edge [bend right = 9] (SE);
        \path [->] (N) edge [bend left] (SE);
      \end{tikzpicture}
    \end{equation*}
    where the arrows in the inner square are the canonical morphisms following from
    $\phi\land\psi\leq\phi$, $\phi\land\psi\leq\psi$, $\phi\leq\phi\lor\psi$, and
    $\psi\leq\phi\lor\psi$, if $X$ is representable or $\FPSH{m}$ there exists at most one morphism
    $\Mor{\Yo{\Restr{m}{\Gamma}{\phi\lor\psi}}}{X}$ such that the diagram commutes, and if $X$ is
    $\SynTy{m}$ or $\SynTm{m}$ there exists exactly one such morphism,
    \item for each mode $m : \Mode$, each context $\Gamma : \Interp{m}$, and each presheaf $X :
    \PSH{\Interp{m}}$, if $X$ is representable, $\FPSH{m}$, or $\SynTy{m}$ there exists at most one
    morphism $\Mor{\Yo{\Restr{m}{\Gamma}{\Fbot}}}{X}$, and if $X$ is $\SynTm{m}$ there exists
    exactly one such morphism.
  \end{itemize}
\end{defi}

\begin{defi}
  A path structure on modal interval structure and a modal natural model (both on the same modal
  context structure) is a direct translation of the rules for path types, and we will thus not give
  the details.
\end{defi}

\begin{defi}
  A composition structure on modal restriction structure is a direct translation of the rules for
  composition, and we will thus not give the details.
\end{defi}

\end{document}